\newcommand{\ket}[1]{\left \vert #1 \right \rangle}
\newcommand{\VV}{\mathcal{V}}
\newcommand{\GG}{\mathcal{G}}
\newcommand{\hh}{\mathcal{H}}
\newcommand{\id}{\mathbbm{1}}
\newcommand{\Id}{\id}
\newcommand{\rr}{\mathbbm{R}}
\newcommand{\dd}{\mathrm{d}}
\DeclareMathOperator{\supp}{supp}
\DeclareMathOperator{\polylog}{polylog}
\DeclareMathOperator{\poly}{poly}
\newcommand{\expval}[3]{\left \langle #1 \vphantom{#2} \vphantom{#3} \right \vert #2 \left \vert \vphantom{#1}\vphantom{#2} #3 \right \rangle}
\newcommand{\braket}[2]{\left \langle #1 \vphantom{#2} \right \vert \left. #2 \vphantom{#1} \right \rangle}
\newcommand{\ve}{\varepsilon}
\newcommand{\be}{\begin{equation}}
\newcommand{\ee}{\end{equation}}
\newcommand{\Mu}{\Upsilon}
\newtheorem{theorem}{Theorem}
\newtheorem{lemma}[theorem]{Lemma}
\theoremstyle{definition}
\begin{document}

\title{Rapid adiabatic preparation of injective PEPS and Gibbs states}

\author{Yimin Ge}
\author{Andr\'as Moln\'ar}
\author{J. Ignacio Cirac}
\affiliation{Max-Planck-Institut  f{\"u}r Quantenoptik, D-85748 Garching, Germany}

\begin{abstract}
We propose a quantum algorithm for many-body state preparation. It is especially suited for injective PEPS and thermal states of local commuting Hamiltonians on a lattice. We show that for a uniform gap and sufficiently smooth paths, an adiabatic runtime and circuit depth of $O(\polylog N)$ can be achieved for $O(N)$ spins. This is an almost exponential improvement over previous bounds. The total number of elementary gates scales as $O(N\polylog N)$. This is also faster than the best known upper bound of $O(N^2)$ on the mixing times of Monte Carlo Markov chain algorithms for sampling classical systems in thermal equilibrium.
\end{abstract}
\maketitle

Quantum computers are expected to have a  deep impact in the simulation of {\it quantum} many-body systems, as initially envisioned by Feynman \cite{Feynman1982}. In fact, quantum algorithms have potential applications in diverse branches of science, ranging from condensed matter physics, atom physics, high-energy physics, to  quantum chemistry \cite{[{See special issue on quantum simulation in }] [{}]NatPhysSim}.  Lloyd \cite{Lloyd1996} was the first to devise a quantum algorithm to simulate the dynamics generated by few-body interacting Hamiltonians. When combined with the adiabatic theorem \cite{Kato1950,Farhi2000}, the resulting algorithms allow one to prepare ground states of local Hamiltonians, and thus to investigate certain quantum many-body systems at zero temperature. Quantum algorithms have also been introduced to prepare 
so-called projected entangled pair states (PEPS) \cite{Verstraete2004,Schwarz2012,PhysRevA.88.032321}, which are believed to approximate ground states of local gapped Hamiltonians. Furthermore, quantum algorithms have also been proposed to sample from Gibbs distributions  \cite{Terhal2000,PhysRevLett.105.170405,Temme2011,Yung2012,PhysRevLett.108.080402,Kastoryano2014}, which describe physical systems in thermal equilibrium. The computational time of most of these algorithms is hard to compare with that of their classical counterparts, as it depends on specific (e.g., spectral) properties of the Hamiltonians which are not known beforehand. However, they do not suffer from the sign problem \cite{SuzukiQMCM}, 
which indicates that they could provide significant speedups.

Quantum computers may also offer advantages in the simulation of {\it classical} many-body systems. For instance, quantum annealing algorithms \cite{PhysRevE.58.5355,Farhi2001} have been devised to prepare the lowest energy spin configuration of a few-body interacting classical Hamiltonian, which has obvious applications in optimization problems. 
Quantum algorithms have also been proposed to 
sample from their  
Gibbs distributions at finite temperature \cite{PhysRevA.76.042306,PhysRevE.56.3661,Somma2007,Wocjan2008,Somma2008,PhysRevA.82.060302}. Apart from 
applications
in classical statistical mechanics, 
similar problems appear in other areas of intensive research, e.g., {machine}
 learning. 
 Speedups as a function of spectral gaps have been analysed in Refs.~\cite{Somma2008,Wocjan2008,Yung2012}; the scaling with large system sizes, which is of particular interest for applications in deep machine learning \cite{[{See, for example, }]Bengio2009}, is however not optimal.

In this Letter we propose and analyse a quantum algorithm to {\em efficiently} prepare a particular set of states. This set contains two classes relevant for lattice problems: (i) injective PEPS \cite{Perez-Garcia2007}; (ii) Gibbs states of locally commuting Hamiltonians. 
Class (ii) contains 
all classical Hamiltonians, and thus the quantum algorithm allows us to 
sample Gibbs distributions of classical problems at finite temperature.

Our algorithm outperforms all other currently known algorithms for these two problems in the case 
that the minimum gap $\Delta$ occurring in the adiabatic paths (to be defined below) is lower bounded by a constant. 
We show that the computational time for a quantum computer, given by the number of elementary gates in a quantum circuit, scales only as
 \be \label{eq:gates}
 T = O\left({N \polylog \left( N/\epsilon\right) }\right),
\ee
where $N$ is the number of local Hamiltonian terms, $\ve$ the allowed error in trace distance and the degree of the polynomial depends on the geometry of the lattice. 
Note that an obvious lower bound on the computational time is $\Omega(N)$, as each of the spins has to be addressed at least once. Thus, Eq.~\eqref{eq:gates} is almost optimal. 
Furthermore, the algorithm is parallelisable, so that the depth of the circuit becomes
\be \label{eq:depth}
	D = O\left( \polylog (N/\ve) \right).
\ee
This parallelisation may also become very natural and relevant in analog quantum simulation, as is the case for atoms in optical lattices \cite{RevModPhys.80.885}.

One of the best classical algorithms to sample according to the Gibbs distribution of a general classical Hamiltonian is the well-known Metropolis algorithm \cite{Metropolis1953}. The currently best upper bound to its computational time is $T=O(N^2/\Delta_{\rm stoch})$ \cite{LevinPeresWilmer2008}, where $\Delta_{\rm stoch}$ is the gap of the generator of the stochastic matrix. We will see that given any stochastic matrix, one can always construct a quantum adiabatic algorithm with the same gap $\Delta=\Delta_{\rm stoch}$,  and thus we obtain a potential quantum speedup of almost a factor of $N$. Under parallelisation, the circuit depth is almost exponentially shorter. Our algorithm  to prepare injective PEPS also provides a better scaling than the one presented in Ref.~\cite{Schwarz2012}.

\begin{figure}[h]
\includegraphics[width=0.45\textwidth]{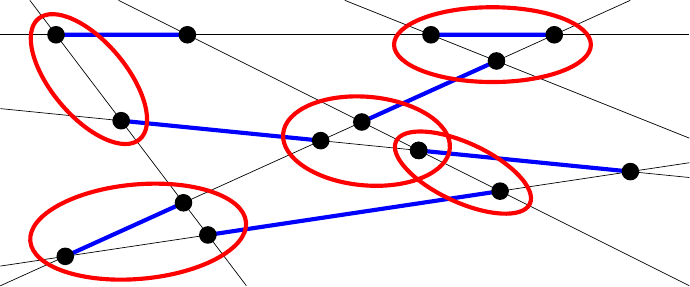}
\caption{The general class of states, Eq.~\eqref{PsiN}. Finite range operators (red) acting on a collection of maximally entangled pair states (blue) distributed on a graph.}\label{fig:graph}
\end{figure}

The class of states we consider in this Letter can be thought of as commuting finite range operators acting on a set of maximally entangled pair states (Fig.~\ref{fig:graph}). More precisely, consider a regular lattice in some dimension, and let $\GG=({\cal V},{\cal E})$ be the associated (infinite) graph. 
 We endow $\GG$ with a distance $\dd$,  the minimum number of edges separating two vertices in $\VV$. We associate a $d$-dimensional Hilbert space, $\hh_v$, to each of the vertices $v\in\VV$. 
Consider the set ${\bf\Lambda}$ 
of interaction supports, 
i.e., ${\bf\Lambda}$ is a collection of sets of vertices whose relative distance is at most a constant $R$, the interaction length, and consider for each $\lambda\in{\bf\Lambda}$ an interaction $Q_\lambda$  which is an operator supported on $\bigotimes_{v\in\lambda}\hh_v$. We assume that they are strictly positive, $\Id\ge Q_\lambda>q_0\Id$, and mutually commute, $[Q_\lambda,Q_{\lambda'}]=0$. Consider also a set ${\bf\Mu}$ of mutually excluding pairs of neighbouring vertices. Moreover, let $\Lambda_N$ be a finite subset of ${\bf\Lambda}$ with $|\Lambda_N|=N$, and define 
 \be
 \label{PsiN}
 |\phi_N\rangle \propto \prod_{\lambda\in \Lambda_N} Q_{\lambda} \bigotimes_{\mu\in \Mu_N} |\phi^+\rangle_\mu,
 \ee
where 
$\Mu_N=\{\mu\in{\bf\Mu}\mid\mu\cap(\bigcup_{\lambda\in\Lambda_N}\lambda)\neq\emptyset\}$ is the set of pairs with a vertex in $\Lambda_N$, and
$|\phi^+\rangle=\sum_{i=1}^d\ket{ii}$ is an unnormalized maximally entangled state between the pairs of vertices in $\Mu_N$. We will give a quantum algorithm to prepare the state Eq.~\eqref{PsiN}, and analyse the runtime as a function of $N$ and other spectral properties. In the following, we drop the subindex $N$ to ease the notation.

\begin{figure}[h]
	\subfloat[][]{
		\includegraphics[width=.23\textwidth]{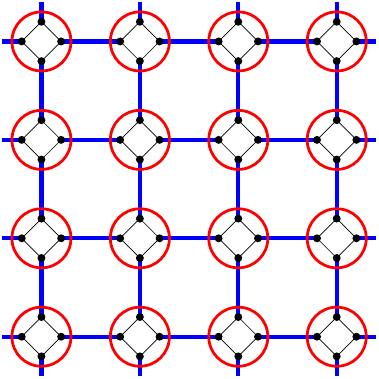}
		\label{subfig:peps}
	}
	\subfloat[][]{
		\includegraphics[width=.23\textwidth]{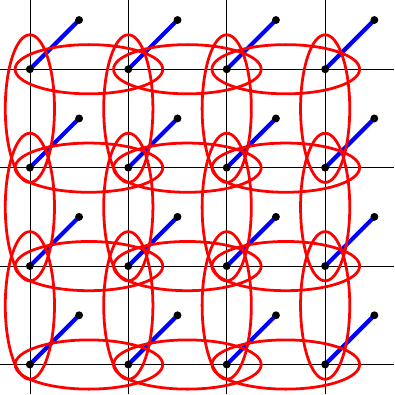}
		\label{subfig:thermal}
	}
	\caption{(a) Projected entangled pair states. (b) Purification of a thermal state. For each system qudit, we introduce an ancilla to be placed in a maximally entangled pair with its system particle, then apply $e^{-\beta H/2}$ to the system.}
	\label{fig:states}
\end{figure}

As mentioned above, Eq.~\eqref{PsiN} includes two relevant classes of states. The first is the class of injective PEPS. The graph is composed of nodes, each of them including a set of vertices (Fig.~\ref{subfig:peps}). In this case, $\Mu$ contains pairs of vertices in nearest neighbor nodes, whereas $\Lambda$ contains each node. The operators $Q_\lambda$ act on different nodes, and therefore trivially commute. The resulting state is just a PEPS, which is injective since each $Q_\lambda$ is invertible. In fact, every injective PEPS can be expressed in this form up to a local unitary using a QR decomposition. The second class is the class of Gibbs states of commuting Hamiltonians \cite{[{See also }][{ for a similar parent Hamiltonian construction.}] Feiguin13}. To see this, consider the graph which contains sites composed of two vertices, one of them is called ``system'' and the other ``ancilla''. The set $\Mu$ contains all sites, whereas $\Lambda$ contains interacting system vertices (Fig.~\ref{subfig:thermal}). The relation with Gibbs states is evident if we write $Q_\lambda=e^{-\beta h_\lambda/2}$, where $||h_\lambda||<1$, and take into account that they mutually commute. It is easy to see that if we trace the ancillas, we obtain
 \be
 \rho \propto e^{-\beta H},
 \ee
where $H=\sum_{\lambda\in\Lambda}h_\lambda$.

The state Eq.~\eqref{PsiN} is the unique ground state of a frustration-free local Hamiltonian that can be written as 
\begin{equation}\label{eq:globalG}
G = \sum_{\mu\in \Mu} G_\mu, 
\end{equation}
with
 \be \label{eq:Gmu}
 G_\mu = \left(\prod_{\lambda\in \Lambda_\mu} Q_\lambda^{-1} \right)P_\mu \left(\prod_{\lambda\in \Lambda_\mu} Q_\lambda^{-1}\right),
 \ee
where $\Lambda_\mu =\{\lambda\in\Lambda\mid\lambda\cap\mu\neq\emptyset\}$ is the set of supports whose interactions act nontrivially on $\mu$,
and $P_\mu$ is the projector onto the subspace orthogonal to $|\phi^+\rangle_\mu$. Notice that since each $G_\mu$ is supported in a region of radius $R$ around $\mu$, $G$ is indeed local.
 
The state Eq.~\eqref{PsiN} can be prepared using an adiabatic algorithm. 
For that, we define a path $Q_\lambda(s)$ with unique ground state $\ket{\phi(s)}$,
where $s\in[0,1]$, with $Q_\lambda(0)=\Id$ and $Q_\lambda(1)=Q_\lambda$. We can choose $Q_\lambda(s)= (1-s)\Id+sQ_\lambda$. In the case of the thermal state, we can also choose $Q_\lambda(s)=e^{-\beta sh_\lambda/2}$. 
Then, by starting with $\ket{\phi(0)}$ and changing the parameter $s:0\to 1$ sufficiently slowly, we will end up in the desired state $\ket{\phi(1)}$. 
The runtime for this preparation, as measured by the number of elementary quantum gates, is unpractical, however, as it scales as $T=O(N^4\Delta^{-3}\varepsilon^{-1}\polylog(N/\ve))$, where $\ve$ is the tolerated error and $\Delta$ is the minimum spectral gap along the path. Indeed, the adiabatic  theorem \cite{Jansen2007} gives an adiabatic runtime of $\tau=O(N^2\Delta^{-3}\varepsilon^{-1})$ so that Hamiltonian simulation \cite{Berry2014} gives $T=O(\tau N^2\polylog(N/\ve))$. 

To obtain a better scaling, we first use a variant of the adiabatic theorem with almost exponentially better runtime dependence on the error using a sufficiently smooth reparameterisation of the Hamiltonian path. The quadratic scaling of the runtime with the derivative of the Hamiltonian, however, leads to an   
unpractical dependence on $N$ since the Hamiltonian contains $O(N)$ terms that change with time. To avoid this, we change the $Q$'s individually, leading to an adiabatic runtime of {$\tau=O(N\log^{1+\alpha}\left(N/\ve\Delta\right)\Delta^{-3})$}. This, however, uses Hamiltonians acting on the whole system, despite only the change of a single $Q$, which would result in an additional factor of $O(N^2\polylog(N/\ve))$ for the computational time measured by the number of elementary gates. We circumvent this problem by using Lieb-Robinson bounds \cite{lieb1972} and the frustration freeness to show that {under the assumption of a uniformly lower bounded spectral gap}, it is at each step sufficient to evolve with a Hamiltonian acting only on $O(\polylog(N/\ve))$ sites instead of the full lattice.

Thus, define a sequence of 
$N$ 
Hamiltonian paths by enumerating the elements of $\Lambda$ as $\lambda_1,\ldots,\lambda_N$, and define
\begin{equation}\label{eq:localisedG}
	G_n(s) = \sum_{\substack{\mu\in\Mu\\ \dd(\mu, \lambda_n) < \chi\log^{1+\alpha}( N/\ve)}}G_{n,\mu}(s)
\end{equation}
for $n=1,\ldots,N$, where 
$\chi$ is a constant control parameter, and
\begin{equation}
 G_{n,\mu}(s)
  = \left(\prod_{\substack{m \\ \lambda_m\in \Lambda_\mu}}  A_{n,m}^{-1}(s) \right)P_\mu \left(\prod_{\substack{m \\ \lambda_m\in \Lambda_\mu}}   A_{n,m}^{-1}(s)\right)
\end{equation}
with 
\begin{equation}\label{eq:Anlambda}
A_{n,{m}}(s) = \begin{cases}
	Q_{\lambda_m}(1) &  m < n \\
	Q_{\lambda_m}(s) & m=n \\
	Q_{\lambda_m}(0)=\Id &  m>n.
\end{cases}
\end{equation}
Notice that $G_n$ is supported on a region of radius $O(\log^{1+\alpha}(N/\ve))$ and $\dd G_n /\dd s$ is supported on a region of bounded size. 
By reparameterising $G_n(s)\to G_n(f(s))$ with $f$ a function in the Gevrey class $1+\alpha$, we can assume $G_n$ to be in the same Gevrey class 
\footnote{Recall that a function $f:[0,1]\rightarrow\rr^m$ is in the Gevrey class $1+\alpha$ \cite{Gevrey1918} (with respect to the norm \unexpanded{$\|.\|$}) if there exist constants $c,K>0$ such that \unexpanded{$\|{\dd^k f(s)}/{\dd s^k}\|\leq Kc^k(k!)^{1+\alpha}$} for all $k$. It is well known \cite{Ramis1978} that $f(s)=\int_0^sf_\alpha(t)\dd t/\int_0^1f_\alpha(t)\dd t$, with $f_\alpha(t)=\exp(-1/((1-t)t)^{1/\alpha})$, is in the Gevrey class $1+\alpha$ for all $\alpha>0$.}.

Consider the sequence of Schr\"odinger equations
\begin{equation}
	i\frac{\dd}{\dd t}\ket{\psi_n} = G_n\left(\frac t{\tau_n}\right)\ket{\psi_n},\quad \ket{\psi_{n+1}(0)}=\ket{\psi_n(\tau_n)}
\end{equation}
for times $\tau_n=O(\log(N/\ve)^{1+\alpha})$, stating in $\ket{\psi_1(0)}=\ket{\phi(0)}$, the trivial ground state of $G_1(0)$.  
The algorithm proceeds by running Hamiltonian simulation \cite{Berry2014} on this sequence of adiabatic evolutions. Since at all times we only evolve with Hamiltonians acting on $O(\polylog(N/\ve))$ sites, the number of gates only grows as Eq.~\eqref{eq:gates}. 
Moreover, the evolution of consecutive $G_n$s can be parallelised if their support is disjoint, i.e., if $G_n,\ldots,  G_{n+l}$ have disjoint supports, the subsequence can be replaced by their sum without altering the evolution. Since $|\supp G_n|=O(\polylog (N/\ve))$, it is clear that an ordering of the $\lambda_n$ can be chosen such that subsequences of length $\Omega(N/\polylog(N/\ve))$ of the $G_n$s can be parallelised at a time, resulting in a circuit of depth Eq.~\eqref{eq:depth}, an almost exponential improvement over previous bounds.

In the following, we show that for a uniformly lower bounded gap, the error of the above algorithm is bounded by $\ve$. 
First, we use that under sufficient smoothness conditions on a Hamiltonian path $G(s)$, 
the  final error can be almost exponentially small in the adiabatic runtime. 
Indeed, as proven in the Supplemental Material, 
if $G$ is in the Gevrey class $1+\alpha$ and $\dd^k G/\dd s^k=0$ at $s=0,1$ for all $k\geq1$, then an adiabatic runtime of
\begin{equation}\label{eq:expErrorT}
 \tau=O\left(\log^{1+\alpha}\left(\frac{K}{\ve\Delta}\right)\frac{K^2}{\Delta^3}\right)
\end{equation}
is sufficient for an error $\ve$, 
where $\Delta$ is the minimum gap of $G(s)$ and $K=|\supp \dd G/\dd s|$ if $G$ is local. 
The required smoothness conditions can always be achieved with a suitable reparametrisation of the path $G(s)\rightarrow G(f(s))$.

This allows us to implement the global change of the Hamiltonian, Eq.~\eqref{eq:globalG}, as a sequence of $N$ local changes. Define the sequence of Hamiltonian paths,
\begin{equation}\label{eq:entireG}
	\tilde G_n(s) = \sum_{\mu\in\Mu} G_{n,\mu}(s).
\end{equation}
Notice that Eq.~\eqref{eq:entireG} is the same as Eq.~\eqref{eq:localisedG}, but contains all local terms $G_{n,\mu}$. 
The weak dependence on $\ve^{-1}$ in Eq.~\eqref{eq:expErrorT} ensures that the accumulated error under the sequential evolution with Eq.~\eqref{eq:entireG} remains small. 
Indeed, for a final error $\ve$, it is sufficient that the evolution with each $\tilde G_n$ in this sequence only generates an error of at most $\ve/N$. Equation~\eqref{eq:expErrorT} and 
$|\supp{\dd\tilde G_n}/{\dd s}|=O(1)$ 
imply that this can be achieved in a time {$\tau_n=O\left(\log^{1+\alpha}(N/\ve\Delta_n)\Delta_n^{-3}\right)$}, where $\Delta_n$ is the minimum spectral gap of $\tilde G_n$. 
A decomposition into a circuit then requires 
$T=O\left(N^3\polylog(N/\ve\Delta)\Delta^{-3}\right)$ 
 elementary gates, where $\Delta=\min_n\Delta_n$. This is already an improvement by a factor $N$ over the naive change of the entire Hamiltonian, assuming similar behaviour of $\Delta$ compared to the spectral gap of the original path $G(s)$.

Assuming that $\Delta=\Omega(1)$, we can further improve on this using Lieb-Robinson bounds to localise the effect of the adiabatic change. Indeed, we show in the Supplemental Material that local terms in Eq.~\eqref{eq:entireG} which are supported at a distance $\Omega(\log^{1+\alpha} (N/\ve))$ away from the support of ${\dd}\tilde G_n/{\dd s}$ do not significantly contribute to the unitary evolution generated by Eq.~\eqref{eq:entireG}. This allows the replacement of Eq.~\eqref{eq:entireG} with Eq.~\eqref{eq:localisedG} without significantly altering the evolution and thus the final state. Notice that 
$G_n$ only acts on $O(\polylog(N/\ve))$ sites 
 and $\tau_n=O(\polylog(N/\ve))$ for all $n$. Thus, its unitary evolution can be simulated with only $O(\polylog(N/\ve))$ gates. Hence, we finally obtain a number of gates in the circuit model that grows only as Eq.~\eqref{eq:gates} for a constant error and lower bounded spectral gap. Using the described parallelisation, we finally obtain a circuit depth Eq.~\eqref{eq:depth}, as claimed.

In the analysis above, we have assumed a gap $\geq\Delta$ along all $N$ paths. This assumption can in fact be relaxed to a gap at either $s=0$ or $s=1$ (see Supplemental Material), using the positivity condition on $Q_\lambda$. 
We thus say that the system has a uniformly lower bounded gap $\Delta$ if for all finite subsets $\Lambda\subset{\bf\Lambda}$, the Hamiltonian Eq.~\eqref{eq:globalG} has a spectral gap $\geq\Delta$. Under this assumption 
\footnote{In fact, this assumption can be relaxed to only hold for the $N$ subsets $\Lambda_n=\{\lambda_1,\ldots,\lambda_n\}, n=1,\ldots,N$, for each problem size $N$, i.e., the sets being used in the algorithm, instead of all finite subsets.}, 
the circuit depth  Eq.~\eqref{eq:depth} can be guaranteed 
\footnote{For thermal states, it can be shown perturbatively that there exists some constant value $\beta_c$ such that the condition on the uniform spectral gap is always satisfied for $\beta<\beta_c$. For practical applications, it can however be hoped that this condition is satisfied for significantly higher $\beta$.}.

For the preparation of thermal states of classical Hamiltonians $H$, it is natural to compare these results with the performance of classical Monte Carlo Markov chain algorithms for Gibbs sampling such as the Metropolis algorithm or Glauber dynamics. Notice that due to the nature of their implementation, a fair comparison of performance should compare the mixing time of a discrete-time Markov chain to the number of elementary quantum gates, whereas the mixing time of a continuous-time Markov chain should be compared to the circuit depth. The best known upper bound  on the discrete mixing time for Monte Carlo Markov chain algorithms for sampling from Gibbs distributions of classical Hamiltonians given just the promise of a spectral gap scales as $O(N^2)$. Although under certain additional assumptions such as translational invariance \cite{MO1994,MO1994-2}, weak  mixing in two dimensions \cite{MOS1994}, or high temperature \cite{Hayes2006}, the existence of a logarithmic Sobolev constant and hence the rapid (discrete) mixing time of $O(N\log N)$ can be proven, no such proof exists for the general case to the best of our knowledge.
Our scheme thus outperforms classical Monte Carlo algorithms whenever rapid mixing cannot be shown even in the presence of a uniform gap. 

Note that any classical Monte Carlo algorithm can be realised as an adiabatic algorithm, as has, e.g., been observed in Ref.~\cite{Somma2007}. Indeed, if $S$ is the generator matrix of a continuous-time Monte Carlo algorithm that satisfies detailed balance with respect to the Gibbs distribution, $G=-e^{\beta H/2}Se^{-\beta H/2}$ is Hermitian. This Hamiltonian has the same spectrum as $-S$ and has the unique ground state $e^{-\beta H/2}\ket{+}^{\otimes N}$. For classical Hamiltonians $H$, this state has the same measurement statistics as $\rho_\beta$ for observables that are products of $\sigma_Z$. {By introducing an ancilla for every particle and applying the map $\ket{i}\mapsto\ket{ii}$, the purified version of the thermal state can also be recovered, and its parent Hamiltonian  has the same spectrum as $-S$ within the symmetric subspace. Thus, any classical system with a uniform spectral gap for the generator matrix can be turned into a rapid adiabatic algorithm.
}

For quantum Hamiltonians, notice the restriction to commuting local terms. For noncommuting local terms, an approximate quasilocal parent Hamiltonian can be considered above some constant temperature that allows the preparation in polynomial time. We describe this procedure in the Supplemental Material. 

For the preparation of injective PEPS, the given algorithm is similar to Ref.~\cite{Schwarz2012}, which, however, requires a runtime of $O(N^4)$ in the uniformly gapped case, due to the use of phase estimation and the ``Marriot-Watrous trick'',  which are computationally expensive for large systems. The better runtime of the present algorithm is largely due to replacing these subroutines by a local adiabatic change.

Throughout the analysis of this Letter, we focused on the case where a uniform constantly lower-bounded spectral gap is assumed. This assumption is only used to obtain a small number of elementary gates and circuit depth, whereas the adiabatic runtime of {$\tau=O(N\log^{1+\alpha}(N/\ve\Delta)\Delta^{-3})$} is independent of this assumption  \cite{[{See also }][{ for lower bounds on generic adiabatic runtimes.}] PhysRevA.81.032308}. In contrast, the runtime of the algorithm to prepare PEPS given in Ref.~\cite{Schwarz2012} only grows as $T\sim\Delta^{-1}$ for small gaps, and for thermal states, algorithms based on quantum walks, phase estimation, and the quantum Zeno effect have been proposed with a runtime of $T\sim\Delta^{-1/2}$ \cite{Somma2008,Wocjan2008,Yung2012}, albeit with worse scaling in the system size. We believe that similar techniques can be applied to our scheme of local changes to obtain a good scaling of the runtime for both large system sizes and small spectral gaps. {Moreover, it would be interesting to investigate if this scheme of local changes can also be applied to  speed up \emph{classical} Monte Carlo algorithms.}

We have also shown that the algorithm can be parallelised, thus giving rise to a circuit depth that scales only polylogarithmically with $N$. This is particularly attractive for certain experimental realisations of analog quantum simulators, such as with atoms in optical lattices \cite{Bloch2012} or trapped ions \cite{Blatt2012}, where this could be carried out in a very natural way.

\begin{acknowledgments}
We thank A.~Lucia, D.~ P\'erez-Garc\'ia, A.~Sinclair, and D.~Stilck~Fran\c{c}a for helpful discussions. This work was supported by the EU Integrated Project SIQS.
\end{acknowledgments}

\onecolumngrid
\appendix

\newpage

\section*{Supplemental Material}

\section{Proof of the adiabatic theorem with almost exponential error decay}

In this section, we prove a variant of the adiabatic theorem that only requires a runtime almost exponentially small in the allowed error. Our proof largely follows the proof given in \cite{Nenciu1993}, which is based on the method of adiabatic expansion \cite{Hagedorn2002}. The adiabatic expansion in \cite{Hagedorn2002} establishes an approximation of the  time-dependent Schr\"odinger evolution in terms of the instantaneous ground state and its derivatives, but on its own does not necessarily imply an adiabatic theorem because it assumes a special initial state. Our proof, like \cite{Nenciu1993}, resolves this problem by exploiting the Gevrey-class condition which allows to satisfy these initial conditions, and uses this expansion to establish a bound on the required runtime. However, unlike \cite{Nenciu1993}, which only proves the almost exponential dependence of the runtime with respect to accuracy, our proof also explicitly establishes the dependence on all other parameters such as the spectral gap and the bound on the Hamiltonian derivatives 
\footnote{Exponentially small errors have also been reported in \cite{Lidar2009}, however, \unexpanded{$\xi(n)$} appearing in Eq.~(22) of that paper should be defined as the supremum over \unexpanded{$S_\gamma$} instead of \unexpanded{$[0,1]$}, which implies a dependence of this quantity on \unexpanded{$N$}. Once this is taken into account, it is unclear how the arguments of that paper imply an exponentially small error for arbitrarily large runtimes. Nevertheless, numerical evidence in \cite{PhysRevA.82.052305} suggests that the error can be viewed as exponentially small for sufficiently small runtimes. }.

Consider a smooth path of Hamiltonians, $G(s)$, $s\in[0,1]$, acting on a finite-dimensional Hilbert space $\hh$. Let $\phi(s)$ \footnote{In the following, we will omit kets and bras to simplify notation} be the ground state of $G(s)$ and $\psi(\epsilon,s)$ the solution of the following Schr\" odinger equation:
\begin{equation}\label{eq:schroedinger}
  i \epsilon \dot{\psi}(\epsilon,s) = G(s)\psi(\epsilon,s), \quad 
  \psi(0)= \phi(0),
\end{equation}
where $1/\epsilon=\tau$ is the runtime of the adiabatic algorithm, and $\dot \ $ denotes derivative with respect to $s$. We assume furthermore that the ground state energy of $G(s)$ is 0 (i.e., we fix the phase of $\psi$) and that it has a gap at least  $\Delta$ throughout the whole path. By an appropriate choice of the phase of $\phi$, we can without loss of generality assume that $\braket{\dot \phi(s)}{\phi(s)}=0$. In the following, we will sometimes drop the explicit dependence on $s$ to simplify the notation. Unless otherwise stated, $\|.\|$ will always denote the operator norm for operators and the Euclidean vector norm for vectors (it will always be clear from the context which one is used). In this section, we prove the following theorem.

\begin{theorem}\label{thm:AT_Gevrey}
  Suppose that all derivatives of $G$ vanish at 0 and at 1, and moreover that it satisfies the following Gevrey condition: there exist non-negative constants $K$, $c$ and $\alpha$  such that for all $k\geq 1$,
  \begin{equation}
   \label{eq:2}
   \|G^{(k)}\|\leq K c^k \frac{[k!]^{1+\alpha}}{(k+1)^2}.
  \end{equation}
   Then, 
   \begin{equation}\label{eq:exp_decay_error}
     \min_\theta\|\psi(\epsilon,1)-e^{i\theta}\phi(1)\|\leq     8 ce\frac{K}{\Delta}\left(\frac{4\pi^2}{3}\right)^3\cdot \exp\left\{- \left(\frac{1}{4ec^2}\left(\frac{3}{4\pi^2}\right)^5\tau \frac{\Delta^3}{K^2}\right)^{\frac{1}{1+\alpha}} \right\}.
   \end{equation}
\end{theorem}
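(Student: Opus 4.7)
\emph{Setting up the adiabatic expansion.} Following the Hagedorn--Joye / Nenciu strategy, I would look for an approximate solution of \eqref{eq:schroedinger} as a formal power series $\psi_n(\epsilon,s)=\sum_{k=0}^n \epsilon^k a_k(s)$ with $a_0(s)=\phi(s)$. Substituting into $i\epsilon\dot\psi=G\psi$ and matching powers of $\epsilon$ gives the hierarchy $G(s)\,a_{k+1}(s)=i\dot a_k(s)$. Decomposing $a_k=\alpha_k\phi+a_k^\perp$ with $a_k^\perp\in\mathrm{range}(Q)$, $Q=\id-|\phi\rangle\langle\phi|$, the perpendicular part is determined by $a_{k+1}^\perp=i R(s)\dot a_k(s)$, where $R=G^{-1}Q$ is the reduced resolvent with $\|R\|\leq 1/\Delta$. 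The solvability condition $P\dot a_k=0$ together with the phase choice $\langle\phi|\dot\phi\rangle=0$ forces $\dot\alpha_k=\langle\dot\phi\,|\,a_k^\perp\rangle$, which fixes $\alpha_k$ up to an initial value.

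\emph{Exploiting the vanishing of all derivatives at the endpoints.} This is where the Gevrey-class hypothesis is crucial: since $G^{(k)}(0)=G^{(k)}(1)=0$ for all $k\geq 1$, the same holds for all derivatives of $P$, $R$, and $\phi$ at the endpoints. By induction on $k$ one then gets $a_k^\perp(0)=a_k^\perp(1)=0$ for $k\geq 1$, and choosing $\alpha_k(0)=0$ ensures $\psi_n(\epsilon,0)=\phi(0)$ exactly, while $\psi_n(\epsilon,1)=\alpha_0(1)\phi(1)$ (an explicit unit-modulus phase). So the approximate solution matches the boundary data to all orders.

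\emph{Duhamel estimate.} The residual satisfies $(i\epsilon\partial_s-G)\psi_n=i\epsilon^{n+1}\dot a_n$, so by Duhamel,
\begin{equation}
  \min_\theta\bigl\|\psi(\epsilon,1)-e^{i\theta}\phi(1)\bigr\|\;\leq\;\epsilon^n\int_0^1\|\dot a_n(s)\|\,ds.
\end{equation}
Everything reduces to proving a Gevrey growth bound on $a_n$ and optimizing $n$.

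\emph{Gevrey growth bound.} The technical heart of the proof is an estimate of the form
\begin{equation}
  \sup_{s\in[0,1]}\|a_k^{(j)}(s)\|\;\leq\;M\Bigl(\tfrac{c' K}{\Delta}\Bigr)^{k}c^{j}\,\frac{(j+k)!^{\,1+\alpha}}{(j+k+1)^2},
\end{equation}
proved by simultaneous induction on $k$ and $j$. The recursion $a_{k+1}=iR\dot a_k+\alpha_{k+1}\phi$ combined with the Leibniz rule for differentiating $R\dot a_k$, the input Gevrey bound \eqref{eq:2} on $G^{(k)}$, and the identity $R^{(1)}=-RG^{(1)}R$ (iterated to give higher derivatives of $R$) should produce exactly this shape; the damping factor $1/(k+1)^2$ in the hypothesis on $G$ is what ensures the combinatorial sums in the inductive step remain bounded by a fixed geometric series. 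This is where I expect the main obstacle to lie, since the Faà di Bruno-type bookkeeping when differentiating the resolvent many times must be carried out without losing a factor growing faster than Gevrey rate $1+\alpha$, and the explicit constants $c$, $K/\Delta$ must be propagated carefully to match the form of \eqref{eq:exp_decay_error}.

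\emph{Optimization over the truncation order.} With such a bound, $\int_0^1\|\dot a_n\|\,ds\leq M'(c'K/\Delta)^n(n!)^{1+\alpha}$, so the total error is bounded by $M'\epsilon^n (c'K/\Delta)^n (n!)^{1+\alpha}$. Using Stirling and minimizing in $n$, the optimum is attained for $n\sim(\tau\Delta^3/K^2)^{1/(1+\alpha)}$ up to the explicit constants coming from $c$ and numerical prefactors. Substituting this back gives an error of the stretched-exponential form $\exp\{-(\text{const}\cdot \tau\Delta^3/K^2)^{1/(1+\alpha)}\}$, matching \eqref{eq:exp_decay_error}. Only the explicit numerical constants in front of $\tau\Delta^3/K^2$ (the factors $4ec^2$ and $(3/4\pi^2)^5$) then require careful, if routine, tracking through the estimates.
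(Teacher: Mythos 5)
Your plan is essentially the paper's own proof (the Nenciu/Hagedorn--Joye adiabatic expansion, endpoint conditions from the vanishing derivatives of $G$, a Duhamel estimate on the truncated series, a Gevrey induction on the expansion coefficients, and optimization over the truncation order), and all the key ideas are present. Two points need fixing, though. First, $\psi_n(\epsilon,1)$ is \emph{not} $\alpha_0(1)\phi(1)$ with unit modulus: for $k\geq 1$ one has $\alpha_k(1)=\int_0^1\langle\dot\phi|a_k^\perp\rangle\,ds\neq 0$ in general (you fixed $\alpha_k(0)=0$, not $\alpha_k(1)=0$), so $\psi_n(\epsilon,1)$ is merely \emph{parallel} to $\phi(1)$ with modulus $1+O(\epsilon)$. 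One must add the step $\min_\theta\|\psi_n(\epsilon,1)-e^{i\theta}\phi(1)\|=\bigl|\|\psi_n(\epsilon,1)\|-1\bigr|\leq\|\psi_n-\psi\|$ and a triangle inequality, which costs a factor $2$ in the final bound (this is exactly how the paper closes this step). Second, your claimed inductive bound $\|a_k^{(j)}\|\lesssim (c'K/\Delta)^k c^j[(j+k)!]^{1+\alpha}/(j+k+1)^2$ is inconsistent with the answer you then write down: a per-order growth of $K/\Delta$ would yield $\exp\{-(\mathrm{const}\cdot\tau\Delta/K)^{1/(1+\alpha)}\}$ after optimization, not the stated $\exp\{-(\mathrm{const}\cdot\tau\Delta^3/K^2)^{1/(1+\alpha)}\}$. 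The correct per-order factor is $\sim c^2K^2/\Delta^3$ (in the paper's notation, $A_2=4c^2K^2\Delta^{-3}(4\pi^2/3)^5$): each step of the recursion $a_{k+1}=iG^{-1}\dot a_k+\alpha_{k+1}\phi$ costs one reduced resolvent ($\Delta^{-1}$), one extra derivative of a Gevrey object ($\sim cK\Delta^{-1}$ per derivative, from the bounds on $(G^{-1})^{(k)}$ and $\phi^{(k)}$), and, through the $\alpha_{k+1}$ term, an additional factor $\|\dot\phi\|\|G^{-1}\|\sim cK\Delta^{-2}$. This is not mere constant-tracking; it is where the $\Delta^3$ and $K^2$ in the theorem come from, so the induction hypothesis must be set up with the correct $K,\Delta$ dependence from the start.
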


Notice that we don't require the Gevrey condition \eqref{eq:2} to hold for $k=0$. Therefore, in the application of Theorem~\ref{thm:AT_Gevrey} in the main text, $K=O(1)$ since along the paths $\tilde G_n(s)$ (as defined in \eqref{eq:entireG} in the main text), only $O(1)$ local terms change. 

\begin{proof}[Proof of Theorem~\ref{thm:AT_Gevrey}]
  Following the adiabatic expansion method from  \cite{Nenciu1993,Hagedorn2002}, we search $\psi(\epsilon,s)$ in the form of an asymptotic series expansion by constructing vectors $\phi_j(s)$, $s\in[0,1]$, $j\geq 0$, such that for all $M>0$, 
  \begin{equation}\label{eq:expansion}
    \left\|\psi(\epsilon,s) - \sum_{j=0}^{M-1} \phi_j(s) \epsilon^j \right \|= O(\epsilon^M). 
  \end{equation}
  We first show an explicit expression for $\phi_j$ provided that such an expansion exists. Second,we prove that the expansion really exists if $G^{(k)}(0)=0$ for all $k$ by giving an explicit error bound. Third, to connect the expansion to the adiabatic theorem, we show that
  \begin{equation}
  \min_\theta\|\psi(\epsilon,1)-e^{i\theta}\phi(1)\|\leq 2 \|\psi(\epsilon,1)-\sum_{j=0}^{M-1} \phi_j(s) \epsilon^j\|,
  \end{equation}
  for some $\theta$ for all $M$ if $G^{(k)}(1)=0$ for all $k$.   This already proves an error bound of $O(\epsilon^M)$ for any $M$. Finally, if $G$ is Gevrey class, then the error bound can be expressed with the help of the parameters appearing in the Gevrey condition and using a suitable choice of $M$ yields to the bound in Eq. (\ref{eq:exp_decay_error}).

  \paragraph*{Explicit form of $\phi_j$.}
 To satisfy the equation at $s=0$, we require $\phi_0(0)=\phi(0)$ and $\phi_j(0)=0$ for all $j>0$. Furthermore, substituting back the ansatz to the Schr\"{o}dinger equation Eq.(\ref{eq:schroedinger}), following \cite{Hagedorn2002}, we arrive at the recursion
  \begin{equation}
    \phi_j=\varphi_j \phi + i G^{-1} \dot{\phi}_{j-1},
   \quad
    \varphi_j=i\int_0^t\dd t' \expval{\dot{\phi}(t')}{G^{-1}(t')}{\dot{\phi}_{j-1}(t')} ,
    \label{eq:adiabatic_recursion}
  \end{equation}
  for all $j>0$, where $\varphi_j(s)$ is a complex number and $G^{-1}$ is the pseudo-inverse of $G$, and initial values are
  \begin{align}
    \phi_0(s)&= \phi(s)\\
    \varphi_0(s)&=1.
  \end{align}
  Note that $\varphi_j(0)$ has to be zero in order for $\phi_j(0)$ to be zero, but this is not a sufficient condition. We will investigate below when  $\phi_j(0)=0$ can be satisfied.
  
  \paragraph*{Existence of the expansion.}   To satisfy $\phi_j(0)=0$ for $j>0$, $\dot{\phi}_{j-1}(0)$ needs to be parallel to $\phi(0)$. This is satisfied if all derivatives of $G$ are $0$ at $s=0$ (see Lemma~\ref{lem:parallel} below). We show that if this condition is fulfilled, then the expansion exists.
  
  Define the truncation of the asymptotic series expansion,
  \begin{equation}
    \label{eq:truncation}
    \psi_M=\sum_{j=0}^{M-1} \phi_j \epsilon^j .
  \end{equation}
  Note that if $\|\psi-\psi_M\| = O(\epsilon^{M-1})$, then the expansion exists. Indeed, then $\|\psi-\psi_M\| = \|\psi-\psi_{M+1} + \epsilon^M \phi_M\| = O(\epsilon^M)$. By construction, $\psi_M$ almost satisfies the Schr\"odinger equation: 
  $i\epsilon \dot{\psi}_M=G\psi_M + i\epsilon^{M} \dot{\phi}_{M-1}$.
  Let $U(s)$ be the dynamics generated by $G/\epsilon$. Then, $\|\psi_M(\epsilon,s)-\psi(\epsilon,s)\|=\|U(s)^\dagger\psi_M(\epsilon,s)-\phi(0)\|$ and
    \begin{equation}\label{eq:1-0}
      \left\|U(s)^\dagger\psi_M(\epsilon,s)-\phi(0)\right\|=\left\|\int_0^s \dd s' \frac{\dd}{\dd s'}\left( U^\dagger \psi_M \right) \right\|= \left\|\epsilon^{M-1} \int_{0}^{s} \dd s' U^\dagger  \dot{\phi}_{M-1}\right\| \leq \epsilon^{M-1}\int_{0}^{s} \dd s' \left\|\dot{\phi}_{M-1}\right\|,
    \end{equation}
    where we used that  if the first $M$ derivatives of $G$ are $0$, then $\psi_M(\epsilon,0)=\phi(0)$. This proves the existence of the expansion.
    
  \paragraph*{Connecting the expansion to the adiabatic theorem.}
  Using the triangle inequality, we obtain
  \begin{equation}
    \label{eq:1}
    \min_\theta\|\psi(\epsilon,1)-e^{i\theta}\phi(1)\|\leq \|\psi(\epsilon,1) -\psi_M(\epsilon,1)\| +\min_\theta\|\psi_M(\epsilon,1)-e^{i\theta}\phi(1)\|.
  \end{equation}
  
  In Lemma~\ref{lem:parallel}, we prove that if the first $M$ derivatives of $G(s)$ vanish at $s=1$, then $\phi_j(1)$ is parallel to $\phi(1)$ for all $j=1,\ldots, M$. Therefore,  $\psi_M(\epsilon,1)$ is parallel to $\phi(1)$, so that $\min_\theta \|\psi_M(\epsilon,1)-e^{i\theta}\phi(1)\|=\left|\|\psi_M(\epsilon,1)\|-1\right|$. But $1=\|\psi(\epsilon,1)\|$, so using the triangle inequality, we get $\min_\theta \|\psi_M(\epsilon,1)-e^{i\theta}\phi(1)\|\leq \|\psi_M(\epsilon,1)-\psi(\epsilon,1)\|$. Therefore,
  \begin{equation}\label{eq:1-1}
    \min_\theta\|\psi(\epsilon,1)-e^{i\theta}\phi(1)\|\leq 2 \|\psi(\epsilon,1) -\psi_M(\epsilon,1)\|.
  \end{equation}
    
  \paragraph*{Choice of $M$.} From Eq.~\eqref{eq:1-0} and \eqref{eq:1-1}, 
  \begin{equation}
  \min_\theta\|\psi(\epsilon,1)-e^{i\theta}\phi(1)\|\leq 2 \|\psi(\epsilon,1)-\psi_M(\epsilon,1)\| \leq 2\epsilon^{M-1} \int_0^1 \|\dot\phi_{M-1}\|,
  \end{equation}
  so we only need to bound $\|\dot\phi_{M-1}\|$. We do this by using that $G$ is Gevrey class. From Lemma~\ref{lem:derivative_bound} below,
  \begin{equation}
    \|\dot\phi_{M-1}\|\leq 2\frac{4\pi^2}{3}\cdot 2c\frac{K}{\Delta}\left(\frac{4\pi^2}{3}\right)^2\cdot  \left[\frac{K^2}{\Delta^3}4c^2 \left(\frac{4\pi^2}{3}\right)^5 \right]^{M-1}\frac{[M!]^{1+\alpha}}{(M+1)^2}.
  \end{equation}
  Therefore,
  \begin{equation}
    \|\psi(\epsilon,1) - \phi(1)\|\leq \epsilon^{M-1} \cdot 8 c\frac{K}{\Delta}\left(\frac{4\pi^2}{3}\right)^3\cdot  \left[\frac{K^2}{\Delta^3}4c^2 \left(\frac{4\pi^2}{3}\right)^5 \right]^{M-1}\frac{[M!]^{1+\alpha}}{(M+1)^2}.
  \end{equation}
  Changing $M$ to $M+1$ and using that 
	${[(M+1)!]^{1+\alpha}}/{ (M+2)^2}\leq M^{(1+\alpha)M}$,
  we obtain
  \begin{equation}
    \|\psi(\epsilon,1) - \phi(1)\|\leq 8 c\frac{K}{\Delta}\left(\frac{4\pi^2}{3}\right)^3\cdot  \left[ \frac{K^2}{\Delta^3}4c^2 \left(\frac{4\pi^2}{3}\right)^5 \frac{M^{1+\alpha}}{\tau}\right]^{M}.
  \end{equation}
  This is true for any $M$, so setting 
  \begin{equation}
    M = \left\lfloor \left(\frac{\tau\Delta^3}{eK^24c^2 \left(\frac{4\pi^2}{3}\right)^5}\right)^{\frac{1}{1+\alpha}}\right\rfloor,
  \end{equation}
  we obtain 
    \begin{equation}
      \|\psi(\epsilon,1) - \phi(1)\|\leq 8 ce\frac{K}{\Delta}\left(\frac{4\pi^2}{3}\right)^3\cdot \exp\left\{- \left(\tau \frac{\Delta^3}{K^2}\frac{1}{4ec^2}\left(\frac{3}{4\pi^2}\right)^5\right)^{\frac{1}{1+\alpha}} \right\} .
    \end{equation}
  This proves Theorem~\ref{thm:AT_Gevrey}.
\end{proof}

We have repeatedly used the following lemma in the proof of Theorem~\ref{thm:AT_Gevrey}.

  \begin{lemma}\label{lem:parallel}
    If $G^{(k)}(s_0) = 0$   for some $s_0\in[0,1]$ and for all $k=1\dots M$, then  
    \begin{enumerate}[(i)]
      \item $\phi^{(k)}(s_0)$  is parallel to $\phi(s_0)$ for all $k=0,\ldots,M$,
      \item $\left[G^{(-1)}\right]^{(k)}(s_0) = 0$ for all $k=1,\ldots,M$,
      \item $\phi_k^{(l)}(s_0)$ is parallel to $\phi(s_0)$ for all $k=0,\ldots,M$ and $l=0,\dots,M-k$. 
    \end{enumerate}
  \end{lemma}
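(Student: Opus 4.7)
The plan is to prove the three parts sequentially, using (i) and (ii) as inputs for the induction in (iii).

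For part (i), I would start from the eigenvalue equation $G(s)\phi(s)=0$, which holds because $\phi$ is the ground state with eigenvalue $0$. Applying the Leibniz rule to the $k$-th derivative of $G\phi$ yields
\begin{equation}
0=\sum_{j=0}^{k}\binom{k}{j}G^{(k-j)}(s)\phi^{(j)}(s).
\end{equation}
Evaluating at $s_0$, every term with $k-j\geq 1$ vanishes by hypothesis, leaving $G(s_0)\phi^{(k)}(s_0)=0$. Since $G(s_0)$ has a gap $\Delta$ above its simple ground state, its kernel is $\mathrm{span}\{\phi(s_0)\}$, so $\phi^{(k)}(s_0)$ is parallel to $\phi(s_0)$. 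This works for all $k\leq M$.

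For part (ii), I would use the Cauchy/contour representation of the pseudo-inverse,
\begin{equation}
G^{-1}(s)=\frac{1}{2\pi i}\oint_{\Gamma}\frac{1}{z}\bigl(z-G(s)\bigr)^{-1}\,dz,
\end{equation}
where $\Gamma$ encloses the nonzero spectrum of $G(s)$ but not the origin (this is well-defined in a neighborhood of $s_0$ by the uniform gap). Differentiating the resolvent gives the Duhamel-style identity $\tfrac{d}{ds}(z-G)^{-1}=(z-G)^{-1}G'(z-G)^{-1}$, and by induction each higher derivative $\tfrac{d^k}{ds^k}(z-G)^{-1}$ is a finite sum of products of resolvents interleaved with at least one factor $G^{(j)}$ with $j\geq 1$. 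At $s_0$ every such term vanishes, hence $[G^{-1}]^{(k)}(s_0)=0$ for $1\leq k\leq M$.

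For part (iii), I would induct on $k$. The base case $k=0$ is exactly part (i), since $\phi_0=\phi$. For the induction step, use the recursion
\begin{equation}
\phi_k=\varphi_k\,\phi+iG^{-1}\dot{\phi}_{k-1}
\end{equation}
and differentiate $l$ times via Leibniz:
\begin{equation}
\phi_k^{(l)}=\sum_{j=0}^{l}\binom{l}{j}\varphi_k^{(l-j)}\phi^{(j)}+i\sum_{j=0}^{l}\binom{l}{j}\bigl[G^{-1}\bigr]^{(l-j)}\phi_{k-1}^{(j+1)}.
\end{equation}
At $s_0$, every term of the first sum is a scalar multiple of $\phi^{(j)}(s_0)$, which by (i) is parallel to $\phi(s_0)$. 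In the second sum, (ii) kills all terms with $l-j\geq 1$; the surviving $j=l$ term is $iG^{-1}(s_0)\phi_{k-1}^{(l+1)}(s_0)$, and the inductive hypothesis (applicable since $l+1\leq M-(k-1)$ when $l\leq M-k$) makes $\phi_{k-1}^{(l+1)}(s_0)$ parallel to $\phi(s_0)$, which is annihilated by the pseudo-inverse $G^{-1}(s_0)$. Thus $\phi_k^{(l)}(s_0)$ is parallel to $\phi(s_0)$ for $l\leq M-k$.

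The main obstacle is part (ii): one must carefully control derivatives of a pseudo-inverse (not an ordinary inverse) in a neighborhood where the gap persists. The cleanest route is the resolvent/contour formula above, which sidesteps the need to track the projector $P(s)$ separately; an alternative would be to differentiate $G^{-1}G=I-\ketbra{\phi}{\phi}$ and induct, but then one must separately show that derivatives of $\ketbra{\phi}{\phi}$ also vanish at $s_0$, which would rely on (i) and introduce circularity unless sequenced carefully.
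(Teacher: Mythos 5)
Your proof is correct and follows essentially the same route as the paper's: Leibniz on $G\phi=0$ for (i), the Cauchy/resolvent representation of the pseudo-inverse for (ii), and induction on $k$ via the recursion $\phi_k=\varphi_k\phi+iG^{-1}\dot\phi_{k-1}$ for (iii). The only cosmetic difference is your choice of contour (enclosing the nonzero spectrum rather than a circle of radius $\Delta/2$ about the origin as in the paper); both representations of $G^{-1}$ are valid and the key mechanism --- each derivative of the resolvent carries a factor $G^{(j)}$ with $j\geq1$ that vanishes at $s_0$ --- is identical.
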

  \begin{proof}
    $G\phi=0$, so $(G\phi)^{(k)}=\sum_{j=0}^k\binom{k}{j} G^{(j)}\phi^{(k-j)}=0$ for all $k$. Applying this for $k\leq M$ and evaluating the result at $s_0$, the derivatives of $G$ vanish, thus $G\phi^{(k)}=0$ and therefore $\phi^{(k)}$ is parallel to $\phi$ at $s_0$, which proves (i).
    
    To prove (ii), use the Cauchy formula 
\begin{equation}\label{eq:parallel-cauchy}
    G^{-1}=\frac{1}{2\pi} \oint_\Gamma (z-G)^{-1}\frac{1}{z}\dd z,
\end{equation}    
    where $\Gamma=\left\{z\in \mathbb{C} \mid |z|=\Delta/2\right\}$ is a fixed curve around 0. Taking the $k$th derivative of Eq.~\eqref{eq:parallel-cauchy} and evaluating it at $s_0$, we see  that the derivatives of $G^{-1}$ also disappear.
     
	To prove (iii), we proceed by induction on $k$.     
     By (i), the claim is true for $k=0$. For $k>0$, we have $\phi_{k}^{(l)}=(\varphi_{k} \phi)^{(l)}+i(G^{-1}\phi_{k-1}^{(1)})^{(l)}$.
    By (i), the first term is parallel to $\phi$ at $s_0$. The second term consists of derivatives of $G^{-1}$ and derivatives of $\phi_k^{(1)}$. By (ii), the derivatives of $G^{-1}$ vanish at $s_0$, so that the only remaining term is $iG^{-1}\phi_{k-1}^{(l+1)}$. But this term also vanishes at $s_0$ by the induction hypothesis. This proves (iii).
	\end{proof}

In the remainder of this section, we derive the bound on the norm of $\dot{\phi}_{M-1}$ which was used in the proof of Theorem~\ref{thm:AT_Gevrey}, following the analysis in \cite{Nenciu1993}. First, we recall two technical lemmas from \cite{Nenciu1993}, which will be used repeatedly.

\begin{lemma}\label{lem:sum}
  Let $p,q$ be non-negative integers and  $r=p+q$. Then,
  \begin{equation}
    \sum_{l=0}^{k} {k\choose l} \frac{[(l+p)!(k-l+q)!]^{1+\alpha}}{(l+p+1)^2(k-l+q+1)^2}\leq \frac{[(k+r)!]^{1+\alpha}}{(k+r+1)^2} \frac{4\pi^2}{3}.
  \end{equation}
\end{lemma}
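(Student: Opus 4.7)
The strategy is to peel off the factor $[(k+r)!]^{1+\alpha}$ from the summand and reduce the claim to an elementary estimate on a sum of reciprocal squares.

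After dividing both sides by $[(k+r)!]^{1+\alpha}/(k+r+1)^2$, the target becomes
\begin{equation*}
\sum_{l=0}^{k}\binom{k}{l}\left[\frac{(l+p)!(k-l+q)!}{(k+r)!}\right]^{1+\alpha}\frac{(k+r+1)^2}{(l+p+1)^2(k-l+q+1)^2}\leq\frac{4\pi^2}{3}.
\end{equation*}
Since $(l+p)!(k-l+q)!\leq(k+r)!$ (equivalently $\binom{k+r}{l+p}\geq 1$) and $\alpha\geq 0$, the bracketed quantity raised to the $(1+\alpha)$-th power is bounded above by that same quantity without the exponent. Combining this with the identity $\binom{k}{l}(l+p)!(k-l+q)!=k!\,p!\,q!\,\binom{l+p}{p}\binom{k-l+q}{q}$ and $(k+r)!=k!\,r!\,\binom{k+r}{r}$, the summand is at most $\frac{p!q!}{r!}\cdot\frac{\binom{l+p}{p}\binom{k-l+q}{q}}{\binom{k+r}{r}}\cdot\frac{(k+r+1)^2}{(l+p+1)^2(k-l+q+1)^2}$.

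Next, keeping only the $j=p$ term in Vandermonde's identity $\binom{k+r}{r}=\sum_{j}\binom{l+p}{j}\binom{k-l+q}{r-j}$ yields $\binom{l+p}{p}\binom{k-l+q}{q}\leq\binom{k+r}{r}$, and $p!q!/r!=1/\binom{r}{p}\leq 1$. Finally, $(l+p+1)+(k-l+q+1)=k+r+2\geq k+r+1$ together with $(a+b)^2\leq 2(a^2+b^2)$ bounds the remaining factor by $2/(l+p+1)^2+2/(k-l+q+1)^2$; summing each of the two tails over $0\leq l\leq k$ (noting that $\{l+p+1\}$ and $\{k-l+q+1\}$ are subsets of the positive integers) gives at most $4\sum_{n\geq 1}1/n^2=\tfrac{2\pi^2}{3}\leq\tfrac{4\pi^2}{3}$.

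I do not foresee any conceptual obstacle: the argument uses only $a!\,b!\leq(a+b)!$, Vandermonde's identity, AM-GM, and the Basel sum. The one point requiring care is the first reduction, where $\alpha\geq 0$ must be used to upgrade the exponent from $1$ to $1+\alpha$ without flipping the direction of the inequality; this is legitimate precisely because the base $(l+p)!(k-l+q)!/(k+r)!$ lies in $[0,1]$.
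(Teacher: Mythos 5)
Your proof is correct and follows essentially the same route as the paper's: both peel off the factor $[(k+r)!]^{1+\alpha}$ via the inequality $\binom{k}{l}(l+p)!(k-l+q)!\leq(k+r)!$ (which the paper phrases as $\binom{k}{l}\leq\binom{k+r}{l+p}^{1+\alpha}$, relying on the same base-in-$[0,1]$ observation that you make explicit), and then bound the residual sum $\sum_l (l+p+1)^{-2}(k-l+q+1)^{-2}$ by a constant times $(k+r+1)^{-2}$. The only difference is cosmetic: you use $(a+b)^2\leq 2(a^2+b^2)$ to split the fraction into two Basel-type tails, whereas the paper splits the summation range at its midpoint; both arguments are valid, and yours in fact yields the slightly sharper constant $2\pi^2/3$.
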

\begin{proof}
  Notice that if $r=p+q$, then
  \begin{equation}
    {k\choose l} [(l+p)!(k+q-l)!]^{1+\alpha}=\frac{{k \choose l}[(k+r)!]^{1+\alpha}}{{k+r\choose l+p}^{1+\alpha}}\leq [(k+r)!]^{1+\alpha}.
  \end{equation}
To upper-bound the summation, divide the sum into two parts at $\lfloor (k-p+q)/2 \rfloor$. If $l<\lfloor (k-p+q)/2 \rfloor$, then $(k-l+q+1)>\lfloor (k+p+q)/2 \rfloor+1$. Otherwise, if $l\geq \lfloor (k-p+q)/2 \rfloor$, then $(l+p+1)\geq \lfloor (k+p+q)/2 \rfloor+1$. Therefore,
  \begin{equation}
    \sum_{l=0}^k \frac{1}{(l+p+1)^2(k-l+q+1)^2}\leq 2 \sum_{l=0}^{\infty} \frac{1}{(l+1)^2}\frac{1}{(\lfloor (k+p+q)/2 \rfloor+1)^2}.
  \end{equation}
This can be upper-bounded by $4\pi^2/3$ as $k+p+q+1\leq 2(\lfloor (k+p+q)/2 \rfloor+1)$. This finishes the proof of Lemma~\ref{lem:sum}.
\end{proof}

We now use Lemma~\ref{lem:sum} to prove that if $A$ and $B$ are Gevrey-class, then their product is also Gevrey-class.
\begin{lemma}
  \label{lem:product}
  Let $A(s),B(s)$ ($s\in[0,1]$) be smooth paths of either vectors in  $\mathcal H$ or operators in $\mathcal{B}(\mathcal H)$ satisfying
  \begin{equation}\label{eq:product_assumption}
    \|A^{(k)}\|  \leq C d^k \frac{[(k+p)!]^{1+\alpha}}{(k+p+1)^2}\quad\text{ and } \quad
    \|B^{(k)}\|  \leq E f^k \frac{[(k+q)!]^{1+\alpha}}{(k+q+1)^2}
  \end{equation}
  for some non-negative constants $C,d,E,f$, non-negative integers  $p,q$, and for all $k\geq 0$. Then,
  \begin{equation}
    \|(AB)^{(k)}\|\leq \frac{4\pi^2}{3} CE g^k \frac{[(k+r)!]^{1+\alpha}}{(k+r+1)^2}
  \end{equation}
  for all $k\geq 0$, where $g=max(d,f)$ and $r=p+q$.
\end{lemma}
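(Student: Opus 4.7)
The proof is a direct application of the Leibniz rule combined with the combinatorial bound established in Lemma~\ref{lem:sum}. The plan is as follows.

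First, I would expand the $k$-th derivative of the product by the generalized Leibniz rule,
\begin{equation}
(AB)^{(k)} = \sum_{l=0}^{k} \binom{k}{l}\, A^{(l)}\, B^{(k-l)},
\end{equation}
and apply the triangle inequality (together with submultiplicativity of the relevant norm — operator norm for operators, or the $\|Av\|\le\|A\|\|v\|$ inequality when $B$ is vector-valued) to obtain
\begin{equation}
\|(AB)^{(k)}\| \leq \sum_{l=0}^{k} \binom{k}{l}\, \|A^{(l)}\|\, \|B^{(k-l)}\|.
\end{equation}

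Next, I would insert the hypothesized Gevrey bounds \eqref{eq:product_assumption} for $\|A^{(l)}\|$ and $\|B^{(k-l)}\|$. The factors of $d$ and $f$ can be handled using $d^l f^{k-l} \le g^k$, which follows from $g=\max(d,f)$. After pulling out the common prefactor $CE g^k$, this leaves
\begin{equation}
\|(AB)^{(k)}\| \leq CE\, g^k \sum_{l=0}^{k} \binom{k}{l}\, \frac{[(l+p)!\,(k-l+q)!]^{1+\alpha}}{(l+p+1)^2 (k-l+q+1)^2}.
\end{equation}

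Finally, I would invoke Lemma~\ref{lem:sum} applied to this sum, which is exactly tailored to give the bound
\begin{equation}
\sum_{l=0}^{k} \binom{k}{l}\, \frac{[(l+p)!\,(k-l+q)!]^{1+\alpha}}{(l+p+1)^2 (k-l+q+1)^2} \leq \frac{4\pi^2}{3}\,\frac{[(k+r)!]^{1+\alpha}}{(k+r+1)^2},
\end{equation}
with $r=p+q$. Combining this with the previous inequality immediately yields the claim. No step poses a real obstacle — the lemma is essentially a bookkeeping exercise, and all the combinatorial difficulty has already been packaged into Lemma~\ref{lem:sum}; the only mild subtlety is ensuring that the norms compose correctly in the mixed operator/vector case, but this is covered by the standard submultiplicativity conventions stated in the paper.
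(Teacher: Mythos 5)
Your proposal is correct and follows exactly the same route as the paper's proof: Leibniz rule plus triangle inequality and submultiplicativity, insertion of the Gevrey bounds with $d^l f^{k-l}\le g^k$, and then Lemma~\ref{lem:sum} to control the resulting sum. Nothing is missing.
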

\begin{proof} We have
  \begin{equation}
    \|(AB)^{(k)}\|\leq \sum_l {k\choose l} \|A^{(l)}\| \|B^{(k-l)}\|,
  \end{equation}
  so inserting the bounds \eqref{eq:product_assumption} and upper-bounding $d$ and $f$ by $g$, we obtain
  \begin{equation}
    \|(AB)^{(k)}\|\leq CE g^k\sum_l {k\choose l}\frac{[(l+p)!]^{1+\alpha}}{(l+p+1)^2}\frac{[(k-l+q)!]^{1+\alpha}}{(k-l+q+1)^2}.
  \end{equation}
Using Lemma~\ref{lem:sum} to upper-bound the r.h.s. of this expression proves Lemma~\ref{lem:product}.
\end{proof}

Next we give a bound on the derivatives of the pseudo-inverse $G^{-1}$. As $G$ is non-invertible, the proof consists of two steps: first reducing the problem to the invertible case, then proving that the inverse of an invertible Gevrey-class operator is again Gevrey-class (assuming that the inverse is uniformly bounded).
\begin{lemma}\label{lem:inverse} If $G$ satisfies Eq. (\ref{eq:2}), then for all $k\geq 0$,
  \begin{equation}
    \|(G^{-1})^{(k)}\|\leq \frac{2}{\Delta} \left(\frac{K}{\Delta}2c\frac{4\pi^2}{3}\right)^k\frac{[k!]^{1+\alpha}}{(k+1)^2}.
  \end{equation}
\end{lemma}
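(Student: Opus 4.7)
The plan is to use the Cauchy-type resolvent representation
\be
G^{-1} = -\frac{1}{2\pi i}\oint_\Gamma (z-G)^{-1}\,\frac{\dd z}{z}\,,
\ee
with $\Gamma=\{z\in\cc:|z|=\Delta/2\}$, in order to reduce the pseudo-inverse problem to an analysis of the genuinely invertible resolvent $R(z,s):=(z-G(s))^{-1}$. Since $\Gamma$ encloses only the ground-state eigenvalue $0$ and stays at distance $\geq\Delta/2$ from every nonzero eigenvalue, we have $\|R(z,s)\|\leq 2/\Delta$ uniformly in $z\in\Gamma$ and $s$. Differentiating $k$ times under the integral and estimating via contour length $\pi\Delta$ together with $|1/z|=2/\Delta$ yields
\be
\|(G^{-1})^{(k)}\|\leq \max_{z\in\Gamma}\|R^{(k)}(z,\cdot)\|\,,
\ee
so it suffices to bound the latter uniformly for $z\in\Gamma$.

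For this, write $A(s)=z-G(s)$ so that $A^{(l)}=-G^{(l)}$ for $l\geq 1$ inherits the Gevrey bound \eqref{eq:2}, and apply the Leibniz rule to $RA=I$ to obtain, for all $k\geq 1$,
\be
R^{(k)} = -\sum_{j=0}^{k-1}\binom{k}{j} R^{(j)}\, A^{(k-j)}\, R\,.
\ee
I would then prove by strong induction on $k$ the bound
\be
\|R^{(k)}\| \leq \frac{2}{\Delta}\, C^k\, \frac{[k!]^{1+\alpha}}{(k+1)^2},\qquad C := \frac{2Kc}{\Delta}\cdot\frac{4\pi^2}{3}\,,
\ee
with the base case $k=0$ immediate from $\|R\|\leq 2/\Delta$. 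For the inductive step, substituting the hypotheses into the recursion produces an overall factor $(2/\Delta)^2 K$ multiplying the combinatorial sum $\sum_{j=0}^{k-1}\binom{k}{j}\, C^j c^{k-j}\,[j!]^{1+\alpha}[(k-j)!]^{1+\alpha}/[(j+1)^2(k-j+1)^2]$.

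The key simplification is the observation that, in the regime $c\leq C$, the constraint $k-j\geq 1$ gives $C^j c^{k-j}\leq c\,C^{k-1}$, which therefore comes out of the sum. What remains is precisely the combinatorial sum bounded by Lemma~\ref{lem:sum} with $p=q=0$, yielding $\frac{4\pi^2}{3}\,[k!]^{1+\alpha}/(k+1)^2$. Assembling these pieces and inserting the explicit value of $C$ closes the induction, and the contour estimate from the first paragraph then gives the lemma. The main delicacy is choosing $C$ so that the recursion is self-reproducing: the accumulated prefactor is $\frac{4Kc}{\Delta^2}\cdot\frac{4\pi^2}{3}\cdot C^{k-1}$, and equating this to $\frac{2}{\Delta}C^k$ forces exactly the stated value $C=\frac{2Kc}{\Delta}\cdot\frac{4\pi^2}{3}$. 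The edge case $c>C$ (i.e.\ $K/\Delta$ very small) can be accommodated by replacing $C$ with $\max(c,C)$ in the same argument, but it does not affect the regime of interest.
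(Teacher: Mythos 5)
Your proof is correct and follows essentially the same route as the paper's: the Cauchy contour representation of the pseudo-inverse over $|z|=\Delta/2$, the reduction to the genuinely invertible resolvent, the Leibniz recursion for its derivatives, and an induction closed by Lemma~\ref{lem:sum}, arriving at the same constant. The only differences are cosmetic --- you differentiate $RA=\Id$ rather than $AR=\Id$, and you explicitly flag the $c\leq C$ regime, an assumption the paper's ``collecting terms'' step also uses implicitly.
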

\begin{proof}
  First, write the pseudo-inverse using the Cauchy formula,
  \begin{equation} \label{eq:inverse-cauchy}
    G^{-1}=\frac{1}{2\pi i} \oint_\Gamma \frac{1}{G-z}\frac{1}{z} \dd z,
  \end{equation}
  where $\Gamma=\{z\in \mathbb{C} \big||z|=\Delta/2\}$ is a fixed, $s$-independent curve. Taking the $k$th derivative of Eq. \eqref{eq:inverse-cauchy} (with respect to $s$), we get
  \begin{equation}
    (G^{-1})^{(k)}=\frac{1}{2\pi i} \oint_\Gamma [(G-z)^{-1}]^{(k)}\frac{1}{z} \dd z.
  \end{equation}
Thus, the norm of the pseudo-inverse can be bounded by the triangle inequality,
  \begin{equation}\label{eq:cauchy_bound}
    \|(G^{-1})^{(k)}\|\leq\max_{z\in \Gamma} \|[(G-z)^{-1}]^{(k)}\|.
  \end{equation}
 Note  that $G-z$ is invertible and $\|(G-z)^{-1}\|\leq 2/\Delta$ for $z\in \Gamma$. We now show that $(G-z)^{-1}$ is also Gevrey-class, more precisely that 
  \begin{equation}\label{eq:inverse-G-z}
    \left\|\left[(G-z)^{-1}\right]^{(k)}\right\|\leq \frac{2}{\Delta}\left(\frac{2}{\Delta}Kc\frac{4\pi^2}{3}\right)^k\frac{[k!]^{1+\alpha}}{(k+1)^2}
  \end{equation}
  for $k\geq 0$. To show this, we proceed by induction. For $k=0$, the bound obviously holds.  Taking the $k$th derivative of   $(G-z)(G-z)^{-1}=\Id$, 
   we get
  \begin{equation}
    \left[(G-z)^{-1}\right]^{(k)} = (G-z)^{-1} \sum_{l=1}^k {k \choose l} (G-z)^{(l)} \left[(G-z)^{-1}\right]^{(k-l)}.
  \end{equation}
    Using the induction hypothesis and collecting terms (notice that $l\geq 1$ and $k-l\leq k-1$), we get
    \begin{equation}\label{eq:inverse-sum}
      \left\|\left[(G-z)^{-1}\right]^{(k)}\right\|\leq \frac{2}{\Delta}\left(\frac{2}{\Delta}Kc\right)^k \left(\frac{4\pi^2}{3}\right)^{k-1}\sum_{l=1}^k {k \choose l} \frac{[l!(k-l)!]^{1+\alpha}}{(l+1)^2(k-l+1)^2}.
    \end{equation}
    Using Lemma \ref{lem:sum} to upper-bound the sum in \eqref{eq:inverse-sum}, we get
    \begin{equation}
      \left\|\left[(G-z)^{-1}\right]^{(k)}\right\|\leq \frac{2}{\Delta}\left(\frac{2}{\Delta}Kc \frac{4\pi^2}{3}\right)^{k}\frac{[k!]^{1+\alpha}}{(k+1)^2}.
    \end{equation}  
  This proves \eqref{eq:inverse-G-z}. Substituting this bound into Eq.~\eqref{eq:cauchy_bound} proves Lemma~\ref{lem:inverse}.
\end{proof}

Next, we prove that the ground state is also Gevrey-class (with the special choice of the phase as above).

\begin{lemma} \label{lem:gs_derivative}If $G$ satisfies Eq. (\ref{eq:2}), then the ground state $\phi$ satisfies 
  \begin{equation}\label{eq:gs-der-lem}
    \left\|\phi^{(k)}\right\|\leq  \left(2c\frac{K}{\Delta}\left(\frac{4\pi^2}{3}\right)^2\right)^{k} \frac{[k!]^{1+\alpha}}{(k+1)^2}.
  \end{equation}
  for all $k\geq0$, where $K,c$ and $\alpha$ are defined in Eq. (\ref{eq:2}) and $\Delta$ is the minimal gap of $G$.
\end{lemma}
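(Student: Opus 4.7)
The plan is to proceed by induction on $k$, paralleling the structure of the proof of Lemma~\ref{lem:inverse}. The base case $k=0$ follows immediately from $\|\phi\|=1$. For the inductive step, I would decompose $\phi^{(k)} = P\phi^{(k)} + Q\phi^{(k)}$, where $P = \ketbra{\phi}{\phi}$ projects onto the ground state and $Q = \Id - P$, and control the two components by two different identities.

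For the orthogonal component $Q\phi^{(k)}$, I would apply the Leibniz rule to $G\phi=0$ to obtain
\begin{equation*}
  G\phi^{(k)} = -\sum_{j=1}^{k}\binom{k}{j}G^{(j)}\phi^{(k-j)},
\end{equation*}
and then invert using the pseudo-inverse (which has operator norm $\leq 1/\Delta$ on $\operatorname{range} Q$) to obtain $Q\phi^{(k)} = -G^{-1}\sum_{j=1}^{k}\binom{k}{j}G^{(j)}\phi^{(k-j)}$. Substituting the Gevrey bound~\eqref{eq:2} for $\|G^{(j)}\|$ and the induction hypothesis for $\|\phi^{(k-j)}\|$ (valid since $k-j < k$), and then invoking Lemma~\ref{lem:sum} with $p=q=0$ to collapse the combinatorial sum, yields a term of the desired geometric form that contributes a single factor of $4\pi^2/3$.

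For the parallel component, I would exploit the phase convention $\braket{\dot\phi}{\phi}=0$. Differentiating this scalar identity $k-1$ times in $s$ via the Leibniz rule gives
\begin{equation*}
  \braket{\phi}{\phi^{(k)}} = -\sum_{j=1}^{k-1}\binom{k-1}{j}\braket{\phi^{(j)}}{\phi^{(k-j)}},
\end{equation*}
expressing $\braket{\phi}{\phi^{(k)}}$ purely in terms of strictly lower-order derivatives. Cauchy--Schwarz and the induction hypothesis then reduce this to a combinatorial sum of the type handled by Lemma~\ref{lem:sum}, producing the remaining factor of $4\pi^2/3$. Combining via the triangle inequality $\|\phi^{(k)}\|\leq |\braket{\phi}{\phi^{(k)}}| + \|Q\phi^{(k)}\|$ should then deliver \eqref{eq:gs-der-lem} with geometric base $2c(K/\Delta)(4\pi^2/3)^2$, where the factor $2$ accommodates the sum of the two contributions and $(4\pi^2/3)^2$ accumulates one $4\pi^2/3$ from each piece.

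The main obstacle is the arithmetic bookkeeping of constants. Both combinatorial sums must close under a single application of Lemma~\ref{lem:sum} with the correct shifts; the $\binom{k-1}{j}$ in the parallel contribution (rather than $\binom{k}{j}$) requires a careful re-indexing before the lemma applies cleanly. An additional subtlety is absorbing factors of $c^j$ from the bound on $\|G^{(j)}\|$ into powers of the larger geometric base of the induction hypothesis; this is mechanical provided the inductive constant dominates $c$, which is consistent with the form of~\eqref{eq:gs-der-lem}. Once these arithmetic checks close, the induction goes through and yields the claimed Gevrey-class bound on the ground state.
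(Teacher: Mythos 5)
Your split into parallel and orthogonal components is natural, but the estimate of the parallel component does not close the induction with the constants claimed in \eqref{eq:gs-der-lem}, and this is more than bookkeeping. Write $B=2c\frac{K}{\Delta}\left(\frac{4\pi^2}{3}\right)^2$ for the geometric base. Differentiating $\braket{\dot\phi}{\phi}=0$ gives $|\braket{\phi}{\phi^{(k)}}|\leq\sum_{i=1}^{k-1}\binom{k-1}{i-1}\|\phi^{(i)}\|\,\|\phi^{(k-i)}\|$; inserting the induction hypothesis, each product contributes $B^iB^{k-i}=B^k$ at \emph{full} strength, and Lemma~\ref{lem:sum} then multiplies the sum by $\frac{4\pi^2}{3}$. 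So the parallel piece alone is only bounded by $\frac{4\pi^2}{3}B^k[k!]^{1+\alpha}/(k+1)^2$, already a factor $\approx 13$ above the target. In your orthogonal piece this loss is harmless because each summand carries $\frac{K}{\Delta}c^j\leq\frac{Kc}{\Delta}B^{j-1}=\frac{1}{2}\left(\frac{3}{4\pi^2}\right)^2B^{j}$ (granting your assumption $c\leq B$), i.e.\ a spare factor far below $\frac{3}{4\pi^2}$ that absorbs Lemma~\ref{lem:sum}'s constant; the parallel piece has no such spare factor. Moreover the defect is structural: if you track the two components with separate prefactors $\mathcal A,\mathcal B$ and keep the base $B$, the parallel recursion forces $\mathcal B\geq\frac{4\pi^2}{3}(\mathcal A+\mathcal B)^2$ while the orthogonal one forces $\mathcal A\geq\frac{1}{2}\cdot\frac{3}{4\pi^2}$, and $S-\frac{4\pi^2}{3}S^2\leq\frac{3}{16\pi^2}<\frac{3}{8\pi^2}$ for all $S$, so no admissible prefactors exist. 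One can salvage your scheme by at least doubling the base (then the two-component induction closes at the boundary), but that changes the statement and would have to be propagated through Lemma~\ref{lem:derivative_bound} and Theorem~\ref{thm:AT_Gevrey}.

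The paper avoids this by never separating the two components at order $k$: it combines $G\phi=0$ with the gauge condition into the single closed first-order identity $\phi^{(1)}=-G^{-1}G^{(1)}\phi$ and applies Leibniz to \emph{that}, so that $\|\phi^{(k)}\|\leq\sum_{l=0}^{k-1}\binom{k-1}{l}\|[G^{-1}G^{(1)}]^{(k-1-l)}\|\,\|\phi^{(l)}\|$. Every term then carries a factor $\|[G^{-1}G^{(1)}]^{(k-1-l)}\|\leq\left(\frac{2Kc}{\Delta}\frac{4\pi^2}{3}\right)^{k-l}[(k-l)!]^{1+\alpha}/(k-l+1)^2$ with exponent $k-l\geq1$; since this rate undershoots $B$ by $\frac{3}{4\pi^2}$ per power, one spare factor of $\frac{3}{4\pi^2}$ always survives to cancel the $\frac{4\pi^2}{3}$ produced by Lemma~\ref{lem:sum}, and the induction closes with prefactor exactly $1$. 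If you want to keep your two-step narrative, the minimal repair is to feed the orthogonality information in at first order (via this identity) rather than bounding $\braket{\phi}{\phi^{(k)}}$ by Cauchy--Schwarz on pairs of lower-order derivatives.
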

\begin{proof}
  We  proceed by induction on $k$. For $k=0$, \eqref{eq:gs-der-lem} just reads $\|\phi\|\leq 1$. For $k>0$, notice that $G\phi=0$ and $\phi^{(1)}=-G^{-1}G^{(1)}\phi$ since the phase of $\phi$ is chosen such that $\braket{\dot{\phi}}{\phi}=0$. Therefore,
  \begin{equation}
    \left\|\phi^{(k)}\right\|=\left\|\left[G^{-1}G^{(1)}\phi\right]^{(k-1)}\right\|.
  \end{equation}
  Expanding the derivatives, we get 
  \begin{equation}\label{eq:gs-der-expanded}
    \left\|\phi^{(k)}\right\|\leq\sum_{l=0}^{k-1} {k-1\choose l}\left\|\left[G^{-1}G^{(1)}\right]^{(k-l-1)}\right\|\left\|\phi^{(l)}\right\|.
  \end{equation}
  The right hand side can be bounded using the induction hypothesis as the higest derivative of $\phi$ appearing there is the $(k-1)$th. For that, we first derive a bound on the norm of the derivatives of $G^{-1}G^{(1)}$. This can be done by applying Lemma \ref{lem:product} to $G^{(1)}$ and $G^{-1}$ and using  Lemma \ref{lem:inverse} to obtain
  \begin{equation}
    \left\|[G^{-1}G^{(1)}]^{(k)}\right\|\leq\left(Kc\frac{2}{\Delta}\frac{4\pi^2}{3}\right)^{k+1} \frac{[(k+1)!]^{1+\alpha}}{(k+2)^2}
  \end{equation}
  for $k\geq 0$. Substituting this bound into \eqref{eq:gs-der-expanded}, we obtain
  
    \begin{equation}
      \|\phi^{(k)}\|\leq\sum_{l=0}^{k-1} {k-1\choose l}\left(Kc\frac{2}{\Delta}\frac{4\pi^2}{3}\right)^{k-l} \frac{[(k-l)!]^{1+\alpha}}{(k-l+1)^2}\left(Kc\frac{2}{\Delta}\left(\frac{4\pi^2}{3}\right)^2\right)^{l} \frac{[l!]^{1+\alpha}}{(l+1)^2}.
    \end{equation}
Notice that $l\leq k-1$, so 
    \begin{equation}
      \|\phi^{(k)}\|\leq\left(Kc\frac{2}{\Delta}\frac{4\pi^2}{3}\right)^{k}\left(\frac{4\pi^2}{3}\right)^{k-1}\sum_{l=0}^{k-1} {k-1\choose l} \frac{[(k-l)!]^{1+\alpha}}{(k-l+1)^2}\frac{[l!]^{1+\alpha}}{(l+1)^2}.
    \end{equation}
Thus, using Lemma \ref{lem:sum}, we obtain
  \begin{equation}
    \left\|\phi^{(k)}\right\|\leq \left(Kc\frac{2}{\Delta}\left(\frac{4\pi^2}{3}\right)^2\right)^{k} \frac{[k!]^{1+\alpha}}{(k+1)^2},
  \end{equation}
  which proves Lemma~\ref{lem:gs_derivative}.
\end{proof}

We are now in the position to bound $\|\dot\phi_{M-1}\|$. Instead of bounding it directly, we prove a general bound on all $\|\phi_{j}^{(k)}\|$. The desired bound is obtained then by setting $j=M-1$ and $k=1$.
\begin{lemma}\label{lem:derivative_bound} For all $j,k\geq 0$, the vectors $\phi_j$ and scalars $\varphi_j$ defined in Eq. (\ref{eq:adiabatic_recursion}) satisfy 
  \begin{equation} \label{eq:derivative_bound}
    \|\phi_j^{(k)}\|\leq A_1 A_2^j A_3^k \frac{[(k+j)!]^{1+\alpha}}{(k+j+1)^2}
\quad\text{ and } \quad
    |\varphi_j^{(k)}|\leq A_2^j A_3^k \frac{[(k+j)!]^{1+\alpha}}{(k+j+1)^2},
  \end{equation}
  where the constants $A_1$, $A_2$ and $A_3$ can be expressed with the constants appearing in Eq. (\ref{eq:2}):
\begin{equation}
  A_1  = \frac{8\pi^2}{3},\quad
  A_3 = 2c\frac{K}{\Delta}\left(\frac{4\pi^2}{3}\right)^2,\quad
  A_2 = 4c^2\frac{K^2}{\Delta^3}\left(\frac{4\pi^2}{3}\right)^5.
\end{equation}
\end{lemma}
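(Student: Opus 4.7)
The plan is to prove both bounds simultaneously by induction on $j$, in the order: first $|\varphi_j^{(k)}|$ (which depends only on quantities controlled at step $j-1$ together with the ground-state derivatives), then $\|\phi_j^{(k)}\|$ (which uses the bound just obtained for $\varphi_j$). The base case $j=0$ is immediate: $\phi_0=\phi$ so the bound follows from Lemma~\ref{lem:gs_derivative} since $A_3$ is exactly the growth constant appearing there, and $A_1=8\pi^2/3\geq 1$ gives room to spare; meanwhile $\varphi_0=1$ gives the required bound for $k=0$, and higher derivatives vanish.

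For the inductive step on $\varphi_j^{(k)}$, I would differentiate the integral representation. For $k\geq 1$ the fundamental theorem of calculus gives $\varphi_j^{(k)}=i[\langle\dot\phi|G^{-1}|\dot\phi_{j-1}\rangle]^{(k-1)}$, which I would bound by applying Lemma~\ref{lem:product} twice: once to combine the Gevrey bound for $G^{-1}$ from Lemma~\ref{lem:inverse} (with shift $p=0$) with the induction hypothesis on $\dot\phi_{j-1}=\phi_{j-1}^{(1)}$ (effective shift $q=j$ after absorbing the extra derivative into the factorial), and a second time to contract against $\dot\phi$ bounded via Lemma~\ref{lem:gs_derivative}. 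The $k=0$ case is handled by $|\varphi_j(s)|\leq\sup_{t'\in[0,1]}|\varphi_j^{(1)}(t')|$ since $\varphi_j(0)=0$, reducing to the $k=1$ estimate.

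For the inductive step on $\phi_j^{(k)}$, I would split the recursion $\phi_j=\varphi_j\phi+iG^{-1}\dot\phi_{j-1}$ into two pieces. The Leibniz rule applied to $\varphi_j\phi$, combined with the bound on $\varphi_j^{(l)}$ just proved and the ground-state bound from Lemma~\ref{lem:gs_derivative}, followed by Lemma~\ref{lem:sum} with parameters $p=j$, $q=0$, yields a term of the right form. The second piece $G^{-1}\dot\phi_{j-1}$ is handled directly by Lemma~\ref{lem:product}, pairing Lemma~\ref{lem:inverse} with the induction hypothesis on $\phi_{j-1}^{(k+1)}$. Summing the two contributions, one verifies that the prefactor $A_1=8\pi^2/3$ is large enough to absorb both terms (the factor $2$ from two summands pairs with one $4\pi^2/3$ from Lemma~\ref{lem:sum}).

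The main obstacle is the careful bookkeeping of multiplicative constants so that the precise values of $A_1$, $A_2$, $A_3$ come out exactly as stated. Each invocation of Lemma~\ref{lem:sum} contributes a factor $4\pi^2/3$; the power $(4\pi^2/3)^5$ in $A_2$ accounts for the cumulative Lemma~\ref{lem:product}/Lemma~\ref{lem:sum} applications incurred per unit increase in $j$ across the two recursions (two in the $\varphi_j$ estimate, then one more propagated through the $\phi_j$ estimate, plus two hidden in Lemma~\ref{lem:inverse} and Lemma~\ref{lem:gs_derivative}), while the $(4\pi^2/3)^2$ in $A_3$ matches the per-derivative cost already inherited from Lemma~\ref{lem:gs_derivative}. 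The delicate combinatorial point is tracking the shift parameters $p,q$ in Lemma~\ref{lem:product} so that their sum equals $j$ (or $j-1$, shifted by the extra derivative), which is precisely what makes the factorial index come out as $[(k+j)!]^{1+\alpha}/(k+j+1)^2$ rather than something weaker.
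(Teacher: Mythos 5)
Your overall strategy — induction on $j$, handling $\varphi_j$ before $\phi_j$, bounding $\varphi_j^{(k)}$ for $k\geq 1$ by two applications of Lemma~\ref{lem:product} to $\langle\dot\phi|G^{-1}|\dot\phi_{j-1}\rangle$, and splitting $\phi_j=\varphi_j\phi+iG^{-1}\dot\phi_{j-1}$ into two pieces each controlled by Lemma~\ref{lem:product}/Lemma~\ref{lem:sum} — is exactly the paper's route, and your bookkeeping of the shift parameters and of the powers of $4\pi^2/3$ entering $A_2$ and $A_3$ is consistent with it. However, there is one genuine gap: your treatment of the $k=0$ case of $|\varphi_j|$. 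You propose $|\varphi_j(s)|\leq\sup_{t'}|\varphi_j^{(1)}(t')|$ and then to invoke the $k=1$ estimate, which (whether taken from the target inequality \eqref{eq:derivative_bound} or derived via your two applications of Lemma~\ref{lem:product}) reads $|\varphi_j^{(1)}|\leq A_2^jA_3\,[(j+1)!]^{1+\alpha}/(j+2)^2$. The $k=0$ target is $A_2^j\,[j!]^{1+\alpha}/(j+1)^2$, so your reduction overshoots by a factor of order $A_3\,(j+1)^{1+\alpha}$, which grows with $j$ and cannot be absorbed into any constant of the form $A_2^j$. Since the bound on $\varphi_j^{(0)}$ enters the Leibniz expansion of $(\varphi_j\phi)^{(k)}$ and hence the bound on $\phi_j^{(k)}$, this slack compounds across the induction (roughly as $[(j+1)!]^{1+\alpha}$ after $j$ steps), so the induction does not close with the stated constants and would degrade the factorial growth of $\|\phi_j^{(k)}\|$, weakening the exponent in Theorem~\ref{thm:AT_Gevrey}.

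The fix, which is what the paper does, is to \emph{not} pay for a derivative at all in the $k=0$ step: bound the integral by the supremum of the integrand over the unit interval and then bound the integrand pointwise by submultiplicativity, $|\varphi_j|\leq\|G^{-1}\|\cdot\|\dot\phi\|\cdot\|\dot\phi_{j-1}\|\leq\frac{2}{\Delta}\cdot\frac{A_3}{4}\cdot A_1A_2^{j-1}A_3\,[j!]^{1+\alpha}/(j+1)^2$. Here the factorial index $[j!]^{1+\alpha}/(j+1)^2$ comes directly from the induction hypothesis applied to $\phi_{j-1}^{(1)}$ and already equals the $k=0$ target index, and the remaining prefactor satisfies $\frac{2}{\Delta}\frac{A_1A_3^2}{4}=A_2$ exactly — there is no room to spare, which is precisely why the lossy reduction to the $k=1$ estimate cannot work. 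With this replacement, the rest of your argument goes through as in the paper.
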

\begin{proof} 
  We proceed by induction on $j$, using the recursion  in relation \eqref{eq:adiabatic_recursion}.
  We first bound $|\varphi_j|$ 
  using the induction hypothesis, then bound $|\varphi_j^{(k)}|$ for $k>0$ before bounding $\|\phi_j^{(k)}\|$. 

\paragraph*{Base case.}
We have $\varphi_0(s)=1$ and $\phi_0(s)=\phi(s)$, 
so \eqref{eq:derivative_bound} holds for $j=0$ since
\begin{equation}
  A_1\geq 1\quad\text{ and }\quad
  A_3\geq 2c\frac{K}{\Delta}\left(\frac{4\pi^2}{3}\right)^2.
\end{equation}

\paragraph*{Bound on $|\varphi_j|$, $j\geq 1$.}
 $|\varphi_j|$ can be bounded by the maximum value of the integrand in Eq. \eqref{eq:adiabatic_recursion},
 \begin{equation}
   |\varphi_j|\leq \|G^{-1}\dot\phi\| \|\dot\phi_{j-1}\|\leq \|G^{-1}\|\cdot \|\dot\phi\| \cdot \|\dot\phi_{j-1}\|.
 \end{equation}
 Using the bound on $\|\dot{\phi}\|$ from Lemma~\ref{lem:gs_derivative}, $\|G^{-1}\|$ from Lemma~\ref{lem:derivative_bound} and the induction hypothesis on  $\|\dot\phi_{j-1}\|$, 
 we get
\begin{equation}
  |\varphi_j|\leq \frac{2}{\Delta} \cdot A_3 \frac{1}{4} \cdot A_1 A_2^{j-1}A_3  \frac{[(j)!]^{1+\alpha}}{(j+1)^2} \leq A_2^j \frac{[(j)!]^{1+\alpha}}{(j+1)^2}
\end{equation}
since
\begin{equation}
  1\geq A_1\frac{ A_3^2 }{A_2}\frac{2}{\Delta} \frac{1}{4}.
\end{equation}

\paragraph*{Bound on $\|\varphi_j^{(k)}\|$.}
We now
bound $|\varphi_j^{(k)}|$ for $k > 0$.  
First, from the induction hypothesis,
\begin{equation} 
  \|\dot\phi_{j-1}^{(k)}\|\leq A_1 A_2^{j-1}A_3^{k+1}\frac{[(k+j)!]^{1+\alpha}}{(k+j+1)^2} .
\end{equation}
Then, using Lemma \ref{lem:product} and Lemma \ref{lem:inverse}, 
we get that for all $k\geq 0$,
  \begin{equation}\label{eq:bound_intermediate}
    \left\|\left(G^{-1}\dot{\phi}_{j-1}\right)^{(k)}\right\|\leq\frac{4\pi^2}{3}\frac{2}{\Delta}A_1 A_2^{j-1}A_3^{k+1} \frac{[(k+j)!]^{1+\alpha}}{(k+j+1)^2}.
  \end{equation}
Moreover, from Lemma~\ref{lem:gs_derivative},
\begin{equation}\label{eq:bound_intermediate2}
  \|\dot\phi^{(k)}\|\leq A_3^{k+1}\frac{[(k+1)!]^{1+\alpha}}{(k+2)^2}.
\end{equation}
Since $\dot\varphi_j= \braket{\dot\phi}{G^{-1}\dot\phi_{j-1}}$, Lemma~\ref{lem:product}, Eq.~\eqref{eq:bound_intermediate} and \eqref{eq:bound_intermediate2} imply that 
\begin{equation}
 |\varphi_j^{(k+1)}|=|\dot \varphi_j^{(k)}|\leq \left(\frac{4\pi^2}{3}\right)^2\frac{2}{\Delta} A_3 A_1A_2^{j-1}A_3^{k+1}\frac{[(k+j+1)!]^{1+\alpha}}{(k+j+2)^2}
\end{equation}
for $k\geq 0$. Changing $k+1$ to $k$ gives
\begin{equation}
  |\varphi_j^{(k)}|\leq \left(\frac{4\pi^2}{3}\right)^2\frac{2}{\Delta} A_3 A_1A_2^{j-1}A_3^{k}\frac{[(k+j)!]^{1+\alpha}}{(k+j+1)^2}\leq  A_2^j A_3^k \frac{[(k+j)!]^{1+\alpha}}{(k+j+1)^2},
\end{equation}
 since
$
  1\geq A_1 \frac{A_3 }{A_2}\frac{2}{\Delta} \left(\frac{4\pi^2}{3}\right)^2
$.

\paragraph*{Bound on $\|\phi_j^{(k)}\|$.}
We now bound $\|\phi_j^{(k)}\|$. 
 Using the bound on $\|\phi^{(k)}\|$ from Lemma~\ref{lem:gs_derivative} and $|\varphi_j^{(k)}|$, Lemma \ref{lem:product} implies
  \begin{equation}
    \|(\varphi_j \phi)^{k}\|\leq \frac{4\pi^2}{3}A_2^{j} A_3^k \frac{[(k+j)!]^{1+\alpha}}{(k+j+1)^2}. 
  \end{equation}
Finally, using  Eq.~\eqref{eq:adiabatic_recursion},
\begin{equation}
  \|\phi_j^{(k)}\| \leq 2 \max\left(\left\|(\varphi_j \phi)^{(k)}\right\|, \left\|\left(G^{-1}\dot{\phi}_{j-1}\right)^{(k)}\right\|\right).
\end{equation}
Hence, since 
\begin{equation}
  2\frac{4\pi^2}{3}  \leq A_1\quad\text{ and }\quad
  \frac{4\pi^2}{3}\frac{2}{\Delta} A_3 \leq A_2,
\end{equation}
we obtain 
\begin{equation}
   \|\phi_j^{(k)}\|\leq A_1 A_2^j A_3^k \frac{[(k+j)!]^{1+\alpha}}{(k+j+1)^2},  
\end{equation}
which finishes the proof of Lemma~\ref{lem:derivative_bound} and hence the proof of Theorem~\ref{thm:AT_Gevrey}.
\end{proof}

\section{Locality of local adiabatic change}

We show in this section that $G_n$ and $\tilde{G}_n$, as defined in Eq. (\ref{eq:localisedG}) and in Eq. (\ref{eq:entireG}) in the main text, generate basically the same dynamics.
The proof relies on $\tilde{G}(0)$ being frustration-free, and a runtime of $\tau =O(\log^{1+\alpha}(N/\ve))$, because it turns out that the achieved locality scales linearly with the runtime. We also use the Lieb-Robinson bound \cite{lieb1972, HastingsKoma2006,NachtergaeleSims2006,BravyiHastingsVerstraete2006}, which states that if $H$ is a local (possibly time-dependent) Hamiltonian with uniformly bounded interaction strengths, $U(t)$ is the unitary evolution generated by $H$, and $O_A,O_B$ are operators supported on regions $A,B$, respectively, then 
\begin{equation}\label{eq:liebRobinson}
	\| [U(t)O_A U^\dagger(t), O_B] \| \leq c\min(|A|,|B|)\|O_A\|\|O_B\|\exp\left(\gamma t - \nu L\right),
\end{equation}
where $L$ is the distance between $A$ and $B$, and $c,\gamma,\nu$ are constants depending only on the geometry of the lattice and the maximum interaction strength.

The following theorem justifies the  replacement  of \eqref{eq:entireG} with  \eqref{eq:localisedG} in the main text, without significantly altering the evolution and thus the final state. 

\begin{theorem}\label{thm:locality}
	Let $\tilde G(s) = \sum_{\mu\in \Mu} G_\mu(s)$ be a frustration-free Hamiltonian path with $O(N)$ local terms such that $|\supp \frac{\dd}{\dd s}\tilde G| = O(1)$, and let $G$ be a localised version of $\tilde G$, i.e.,
	\begin{equation}\label{eq:SM_localized_G}
  G(s) = \sum_{\mu\in \Mu'} G_\mu(s), \quad \Mu' = \left\{\mu\in\Mu \mid d\Big(\supp\frac{\dd}{\dd s}{\tilde G},\supp G_\mu\Big)< \chi \tau \right\}
\end{equation}
for some constant $\chi$ and 
 adiabatic runtime $\tau$. Let $\psi$ and $\tilde \psi$ be the evolved states under $G$ and $\tilde G$ respectively, i.e.,
\begin{equation}
i\frac{\dd}{\dd t} \psi(t) = G\left(\frac{t}{\tau}\right) \psi(t),\quad
i\frac{\dd}{ \dd t} {\tilde \psi}(t) = \tilde G\left(\frac{t}{\tau}\right) \tilde \psi(t), \quad t\in[0,\tau],\quad
\psi(0)=\tilde \psi(0) = \phi(0),
\end{equation}
where $\phi(0)$ is the ground state of $\tilde G(0)$. 
Then, for sufficiently large $\chi=O(1)$,
\begin{equation}
	\|\tilde \psi(\tau)-\psi(\tau)\|\leq cN^2\tau^2e^{(\gamma-v\chi/2)\tau},
\end{equation}
where $c,\gamma,v$ are the constants from \eqref{eq:liebRobinson}. In particular, if $\tau = O(\log^{1+\alpha}(N/\ve))$, then
\begin{equation}
    \|\tilde \psi(\tau)-\psi(\tau)\|\leq \varepsilon /N
  \end{equation}
   for sufficiently large $\chi=O(1)$.
\end{theorem}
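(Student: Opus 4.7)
The plan is to use Duhamel's formula to reduce to bounding $\|\Delta G\,\psi(s)\|$, then exploit frustration-freeness twice --- through $G_\mu\phi(0)=0$ and $G(0)\phi(0)=0$ --- to recast this in a form where Lieb--Robinson, applied in the interaction picture with respect to $G(0)$, yields exponential decay in the separation $\chi\tau$. First, for $\mu\notin\Mu'$ the distance $d(\supp G_\mu,\supp\tfrac{\dd}{\dd s}\tilde G)\geq\chi\tau>0$ forces $G_\mu$ to be $s$-independent, so that $\Delta G:=\tilde G-G=\sum_{\mu\notin\Mu'}G_\mu$ is time-independent, and Duhamel gives
\begin{equation}
\tilde\psi(\tau)-\psi(\tau)=-i\int_0^\tau \tilde U(\tau,s)\,\Delta G\,\psi(s)\,\dd s,
\end{equation}
reducing the task to controlling each $\|G_\mu\psi(s)\|$ for $\mu\notin\Mu'$.

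Using $G_\mu\phi(0)=0$ and writing $U(s,0)=e^{-iG(0)s}U_I(s)$ with $U_I$ generated by the interaction-picture Hamiltonian $V_I(t)=e^{iG(0)t}[G(t/\tau)-G(0)]e^{-iG(0)t}$, I would use the additional frustration-freeness relation $G(0)\phi(0)=0$ to strip the phase, $e^{-iG(0)s}\phi(0)=\phi(0)$, so that
\begin{equation}
\|G_\mu\psi(s)\|=\|\bar G_\mu(s)\,U_I(s)\phi(0)\|,\qquad \bar G_\mu(s):=e^{iG(0)s}G_\mu e^{-iG(0)s},
\end{equation}
with $\bar G_\mu(s)\phi(0)=0$ again by frustration-freeness. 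The point is that $G(t/\tau)-G(0)$ has support contained in $\supp\tfrac{\dd}{\dd s}\tilde G$ of size $O(1)$; Lieb--Robinson for the evolution by the local Hamiltonian $G(0)$ then shows that $V_I(t)$, and hence $U_I(s)$ via its Dyson series, admits a local approximation $\tilde U_I(s)$ supported in a ball of radius $\sim\chi\tau/2$ around $\supp\tfrac{\dd}{\dd s}\tilde G$, with error $\|U_I-\tilde U_I\|\lesssim\tau\, e^{(\gamma-v\chi/2)\tau}$. Symmetrically, $\bar G_\mu(s)$ admits a local approximation $\tilde G_\mu(s)$ supported in a ball of radius of order $\tau$ around $\supp G_\mu$, which for $\mu\notin\Mu'$ lies at distance $\geq\chi\tau/2$ from $\supp\tilde U_I(s)$ once $\chi$ is chosen sufficiently large.

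Because the two local approximations then have disjoint supports, $[\tilde G_\mu(s),\tilde U_I(s)]=0$ and
\begin{equation}
\tilde G_\mu(s)\,\tilde U_I(s)\,\phi(0)=\tilde U_I(s)\,\tilde G_\mu(s)\,\phi(0),
\end{equation}
which has norm at most $\|\bar G_\mu-\tilde G_\mu\|$ after invoking $\bar G_\mu\phi(0)=0$. Collecting the two Lieb--Robinson errors yields $\|G_\mu\psi(s)\|\lesssim\tau\, e^{(\gamma-v\chi/2)\tau}$, and summing over the $\leq N$ terms $\mu\notin\Mu'$ and integrating $s\in[0,\tau]$ produces the claimed bound up to polynomial prefactors that are absorbed by the exponential; the second conclusion for $\tau=O(\log^{1+\alpha}(N/\ve))$ then follows since any polynomial in $N,\tau$ is dominated by $e^{-(v\chi/2-\gamma)\tau}$ for $\chi$ large. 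The main obstacle is the Dyson-series / Lieb--Robinson bound on $\|U_I-\tilde U_I\|$: the driving $V_I(t)$ is itself only approximately local, so the localization of $U_I$ must be derived by iterating Lieb--Robinson along the Dyson expansion while carefully tracking the constants, ensuring the final exponent matches $(\gamma-v\chi/2)$ rather than degrading to a smaller effective rate.
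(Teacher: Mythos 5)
Your argument is correct in outline but follows a genuinely different route from the paper. The paper never touches the interaction picture: it splits off the \emph{boundary} $B=\{\mu\mid d(\mu,\lambda)=\lceil\chi\tau\rceil\}$ of $\Mu'$, uses frustration-freeness to replace $U$ by $U_{\bar B}$, writes the discrepancy as a unitary $V$ generated by $\tilde U^\dagger G_B\tilde U$, compares it with the unitary $V_X$ generated by $U_X^\dagger G_B U_X$ for a neighbourhood $X$ of $B$ of radius $\chi\tau/2$ (which fixes $\phi(0)$, again by frustration-freeness), and bounds $\|\tilde U^\dagger G_B\tilde U-U_X^\dagger G_BU_X\|$ by a second Duhamel identity whose integrand is exactly the commutator $[U_X^\dagger G_BU_X,G_{\bar X}]$ controlled by Eq.~\eqref{eq:liebRobinson}. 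You instead localise around $\supp\,\dd\tilde G/\dd s$ and show each distant term $G_\mu$ approximately annihilates $\psi(s)$; both uses of frustration-freeness ($G_\mu\phi(0)=0$ and $e^{-iG(0)s}\phi(0)=\phi(0)$) are the right ones, and the bookkeeping ($O(N)$ terms $\mu$, an $s$-integral, and an exponentially small factor per term) lands at $O(N\tau^2 e^{(\gamma-\nu\chi/c)\tau})$, which even improves the paper's $N^2$ prefactor. Two remarks. First, your route needs the \emph{local-approximation} corollary of Lieb--Robinson (existence of strictly supported $\tilde V_I(t)$ and $\tilde G_\mu(s)$ with operator-norm error $\sim e^{\gamma t-\nu\ell}$), which is standard but is an extra step beyond the commutator bound \eqref{eq:liebRobinson} that the paper quotes; the paper's boundary decomposition is arranged precisely so that only raw commutators appear. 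Second, the step you flag as the main obstacle is actually routine: once $\tilde V_I(t)$ is strictly local, $\|U_I(s)-\tilde U_I(s)\|\leq\int_0^s\|V_I-\tilde V_I\|\,\dd t$ by the elementary comparison of unitaries generated by nearby Hamiltonians, so no iteration along the Dyson series is required. The only real price is that splitting the separation $\chi\tau$ between the two localisation radii degrades the exponent to something like $\gamma-\nu\chi/4$ rather than $\gamma-\nu\chi/2$, which is immaterial since the theorem only requires $\chi=O(1)$ sufficiently large.
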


\begin{proof}
 For any $\Omega \subseteq \Mu$, let $G_\Omega = \sum_{\mu\in\Omega} G_\mu$ and $U_\Omega(t,s)$ be the unitary dynamics generated by $G_\Omega$. Then $U_\Omega$ satisfies
 
 \begin{align}
   \partial_t U_\Omega(t,s) & = - i G_\Omega(t/\tau) U_\Omega(t,s), \label{eq:partialtU}\\
   \partial_s U_\Omega(t,s) & = - i  U_\Omega(t,s) G_\Omega(s/\tau),\\
   U_\Omega(t,s) & = \mathcal{T}\exp \left\{-i\int\limits_{s}^{t}dt' G_\Omega(t'/\tau)\right\}.
 \end{align}
Notice that $G_{\Mu'} = G$ and $G_\Mu = \tilde G$. We write $U=U_{\Mu'}$ and $\tilde U= U_\Mu$.
 Let $B$ be the boundary of $\Mu'$, that is, $B=\{\mu\in\Mu \mid d(\lambda,\mu)= \lceil\chi \tau \rceil\}$ and $\bar{B}=\Mu \backslash B$.  Then, since $\tilde{G}(0)$ is frustration-free and  all terms outside of $\Mu'$ are constant, $U_{\bar{B}}(t,0) \phi(0) = U(t,0) \phi(0)$ as $U_{\bar{B}} = U\otimes U_{\bar B\backslash \Mu'}$ and $U_{\bar B\backslash \Mu'}\phi (0)= \phi(0)$. In other words, $G_{\bar{B}}$ generates the same dynamics as $G$. 
Thus,
\begin{equation}\label{eq:locality_unitary1}
  \|\tilde{\psi}(\tau)-\psi(\tau)\|=\|\tilde U\phi(0)-U\phi(0)\|=\|\tilde U\phi(0)-U_{\bar{B}}\phi(0)\|=\|\phi(0)-\tilde U^\dagger U_{\bar{B}} \phi(0)\|,
\end{equation}
where $\tilde U$ and $U_{\bar B}$ are evaluated at $(\tau,0)$.
Let $V(t)=\tilde U^\dagger(t,0) U_{\bar{B}}(t,0)$. Then, since $ G_B = \tilde G - G_{\bar B}$, Eq.~\eqref{eq:partialtU} implies 
\begin{equation}
  \frac{\dd}{\dd t} V = i \tilde U^\dagger(t,0) G_B(t/\tau) \tilde U(t,0) V(t).
\end{equation}
We now approximate $V$ with a local unitary to obtain a bound for \eqref{eq:locality_unitary1}. Let $X=\{\mu\in\Mu\mid d(\mu,B)\leq r\}$ for some $r$ to be specified below, and let 
\begin{equation}
  V_X(t)=\mathcal{T}\exp\left\{i\int_0^t dt' U^\dagger_X(t',0) G_{B}(t'/\tau) U_X(t',0)\right\}
\end{equation}
For $r =\frac12   \chi \tau$ and sufficiently large $\chi=O(1)$, 
$X$ and $\supp \dot G$ are disjoint since $|\supp \dot G|=O(1)$, so $G_X(t/\tau)=G_X(0)$ for all $t\in[0,\tau]$. Because of frustration-freeness, $G_X(t/\tau)\phi(0)=0$, and thus the dynamics generated by $G_X$ acts trivially on the initial state, i.e., $U_X(t,0)\phi(0)=\phi(0)$. Thus, $V_X$ also acts trivially on the initial state, $V_X(t)\phi(0)=\phi(0)$. Hence, substituting this into Eq.~\eqref{eq:locality_unitary1}, we get
\begin{equation}
  \| \phi(0)-\tilde U^\dagger U\phi(0)\|=\|V_X\phi(0)-\tilde U^\dagger U_{\bar{B}} \phi(0)\|=\|\phi(0)-V_X^\dagger V \phi(0)\|.
\end{equation}
From the definition of $V$ and of $V_X$, 
\begin{equation}\label{eq:dtVXV}
\frac{d}{dt}(V_X^\dagger V)={i}V_X^\dagger (\tilde U^\dagger G_B\tilde U-U_X^\dagger G_B U_X)V,
\end{equation}
where $G_B$ is evaluated at $t/\tau$. 
Thus, by integrating \eqref{eq:dtVXV} and using the triangle inequality and unitary invariance of the operator norm,
\begin{equation}
  \|\phi(0)-V_X^\dagger V \phi(0)\|\leq \int_0^\tau\dd t \|\tilde U^\dagger G_B\tilde U-U_X^\dagger G_B U_X\| = \int_0^\tau \dd t \|G_B-\tilde U U_X^\dagger G_B U_X \tilde U^\dagger \|,
\end{equation}
where the unitary evolutions are taken from $0$ to $t$ and $G_B$ is evaluated at $t/\tau$. Observe that 
\begin{equation} \label{eq:dsUUX}
  \partial_s \left(\tilde U(t,s)U^\dagger_X(t,s)\right)=-i\tilde U(t,s)G_{\bar{X}}(s/\tau)U^\dagger_X(t,s),
\end{equation}
where $\bar X = \Mu\backslash X$. Integrating \eqref{eq:dsUUX} over $s$, we get 
\begin{equation}
  G_B(t/\tau)-\tilde U U_X^\dagger G_B(t/\tau) U_X \tilde U^\dagger = {-}i\int_0^t \dd s \tilde U(t,s)\left[G_{\bar{X}}(s/\tau),U^\dagger_X(t,s) G_B(t/\tau) U_X(t,s)\right]\tilde U^\dagger(t,s).
\end{equation}
Therefore, using the triangle inequality and the unitary invariance of the norm, we get
\begin{align}
 \|\psi(\tau) - \tilde \psi(\tau)\|&\leq \int_0^\tau \dd t \int_0^t \dd s\left\|\left[U_{X}^\dagger(t,s) G_B(t/\tau) U_{X}(t,s),G_{\bar{X}}(s/\tau)\right]\right\| \\
  &\leq \int_0^\tau \dd t \int_0^t ds\ c N^2 e^{\gamma (t-s) - \nu r}\leq c N^2 \tau^2 e^{\gamma \tau -\nu r},
\end{align}
where the second line follows from the Lieb-Robinson bound as $B$ and $\bar{X}$ are separated by a distance $r=\frac12   \chi \tau$. 
This proves Theorem~\ref{thm:locality}.
\end{proof}

\section{Relaxations on the assumption of a uniform gap along the path}
In this section, we show that the assumption of a spectral gap along the entire path of $\tilde G_n$ can be relaxed.  

\begin{theorem}
	Suppose that $\tilde G_n(0)$ has a spectral gap of at least $\delta>0$. Then $\tilde G_n(s)$ has a spectral gap of at least $ q_0^{2}\delta$ for all $s\in[0,1]$, where $q_0$ satisfies that $\Id\geq Q_\lambda\geq q_0 \Id$ (with $Q_\lambda$  as in Eq. (\ref{PsiN}) in the main text). In particular, a uniform gap as defined in the main text implies a constantly lower bounded gap along the entire Hamiltonian path in the given algorithm.
\end{theorem}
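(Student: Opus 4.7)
My plan is to compare $\tilde G_n(s)$ to $\tilde G_n(0)$ via a congruence transformation by the single-site positive operator $R(s) := Q_{\lambda_n}(s)^{-1}$. Since only $A_{n,n}(s) = Q_{\lambda_n}(s)$ depends on $s$ and all $Q_\lambda$ mutually commute, $R(s)$ can be pulled out of each local term $G_{n,\mu}(s)$ with $\lambda_n \in \Lambda_\mu$, giving $G_{n,\mu}(s) = R(s) G_{n,\mu}(0) R(s)$, while $G_{n,\mu}(s) = G_{n,\mu}(0)$ for the other $\mu$'s. Summing yields the decomposition
\[
\tilde G_n(s) = R(s) A_0 R(s) + B,\qquad A_0 := \sum_{\mu:\,\lambda_n \in \Lambda_\mu} G_{n,\mu}(0),\quad B := \tilde G_n(0) - A_0,
\]
with $[R(s), B] = 0$ because every $G_{n,\mu}(0)$ in $B$ has $\mu \cap \lambda_n = \emptyset$ and is built from commuting $Q_\lambda$'s.

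The central step is the operator inequality $\tilde G_n(s) \geq q_0^{2} R(s) \tilde G_n(0) R(s)$. Using $[R(s),B]=0$, the difference equals $(1-q_0^{2}) R(s) A_0 R(s) + (\Id - q_0^{2} R(s)^{2}) B$; both summands are manifestly positive, the second because $R(s) \leq q_0^{-1}\Id$ (from $Q_{\lambda_n}(s) \geq q_0 \Id$) implies $\Id - q_0^{2} R(s)^{2} \geq 0$, and this operator commutes with the positive $B$. Moreover, $\tilde G_n(s)$ and $R(s) \tilde G_n(0) R(s)$ are both frustration-free with the same one-dimensional kernel $\mathrm{span}(|\phi_n(s)\rangle)$, $|\phi_n(s)\rangle \propto Q_{\lambda_n}(s) |\phi_n(0)\rangle$ (for the RHS, via $\ker(RHR) = R^{-1}\ker H$ and $R(s)^{-1} = Q_{\lambda_n}(s)$). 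Hence $\mathrm{gap}(\tilde G_n(s)) \geq q_0^{2}\,\mathrm{gap}(R(s) \tilde G_n(0) R(s))$.

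Finally, I would invoke the elementary lemma that $\mathrm{gap}(RHR) \geq \lambda_{\min}(R)^{2}\,\mathrm{gap}(H)$ for positive invertible Hermitian $R$ and $H \geq 0$: starting from $H \geq \mathrm{gap}(H)(\Id - P_{\ker H})$, it suffices to note that the nonzero eigenvalues of $R(\Id - P_{\ker H}) R$ coincide with those of the compression of $R^{2}$ to $(\ker H)^{\perp}$, and are therefore bounded below by $\lambda_{\min}(R)^{2}$. Applied with $R = R(s)$ (which has $\lambda_{\min}(R(s)) \geq 1$ since $Q_{\lambda_n}(s) \leq \Id$) and $H = \tilde G_n(0)$, this yields $\mathrm{gap}(R(s) \tilde G_n(0) R(s)) \geq \delta$, and hence $\mathrm{gap}(\tilde G_n(s)) \geq q_0^{2}\delta$. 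I expect this gap-comparison lemma to be the main subtlety: the naive bound $RHR \geq \lambda_{\min}(R)^{2} H$ fails whenever $[R,H] \neq 0$, so the argument must thread through the kernel projector of $H$ rather than comparing $H$ and $RHR$ directly.
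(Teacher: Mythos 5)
Your proof is correct and follows essentially the same route as the paper: both establish the operator inequality $\tilde G_n(s)\geq q_0^{2}\,Q_{\lambda_n}(s)^{-1}\tilde G_n(0)\,Q_{\lambda_n}(s)^{-1}$ (using $Q_{\lambda_n}(s)\geq q_0\Id$ and commutativity for the terms not touching $\lambda_n$), observe that the two operators share the same kernel, and then use $Q_{\lambda_n}(s)^{-1}\geq\Id$ to transfer the gap $\delta$ of $\tilde G_n(0)$ to the conjugated operator. The only cosmetic difference is that you phrase the last step as a general compression lemma for $\mathrm{gap}(RHR)$, whereas the paper phrases it as $X_n(s)^2\geq q_0^2\delta X_n(s)$; these are equivalent.
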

\begin{proof}
Since $\tilde G_n(0)$ is positive semidefinite and has a non-trivial kernel, the spectral gap condition of $\tilde G_n(0)$ is equivalent to $G_n(0)^2\geq \delta G_n(0)$.
 Note that $\Id\geq A_{n,m}(s)\geq q_0\Id$ (with $A_{n,m}$ as in Eq.~\eqref{eq:Anlambda}).
 Let $X_n(s)=q_0^{2}A_{n,n}^{-1}(s)\tilde G_n(0)A_{n,n}^{-1}(s)$. Then,
\begin{align}
 \tilde G_n(s) &= \sum_{\mu\in\Mu} \left(\prod_{m:\lambda_m\in \Lambda_\mu}  A_{n,m}^{-1}(s) \right)P_\mu \left(\prod_{m:\lambda_m\in \Lambda_\mu}   A_{n,m}^{-1}(s)\right)\\
 &\geq q_0^{2}A_{n,n}^{-1}(s) \sum_{\mu\in\Mu} \left(\prod_{\substack{m: \lambda_m\in \Lambda_\mu\\ m\neq n}}  A_{n,m}^{-1}(s) \right)P_\mu \left(\prod_{\substack{m:\lambda_m\in \Lambda_\mu\\ m\neq n}}   A_{n,m}^{-1}(s)\right)A_{n,n}^{-1}(s)\\
 &= q_0^{2}A_{n,n}^{-1}(s)\tilde G_n(0)A_{n,n}^{-1}(s)\\ &=X_n(s),
\end{align}
where we used in the second line $\|A_{n,n}^{-1}(s)\|\leq q_0^{-1}$. 
Notice that $\tilde G_n(s)$ and $X_n(s)$ have the same kernel and are both positive semidefinite. Thus, the gap of $\tilde G_n(s)$ is lower bounded by the gap of $X_n(s)$. But since $G_n(0)^2\geq \delta G_n(0)$, we also have $X_n(s)^2 \geq q_0^{2}\delta X_n(s)$ since $A_{n,n}^{-2}(s)\geq \id$. Thus, $\tilde G_n(s)$ has a spectral gap of at least $\Delta_n\geq q_0^{2}\delta$. 
\end{proof}

\section{Gibbs state preparation in the non-commuting case for high temperatures}

The algorithm we presented to prepare a purifiaction of the Gibbs state of a Hamiltonian used explicitly that the Hamiltonian is a sum of commuting terms. Thus, one may wonder if one can apply it directly to Gibbs states of non-commuting Hamiltonians $H$. The genaral answer is no. The reason is that even though a parent Hamiltonian can still be defined as
\begin{equation}\label{eq:Gnl}
 G^{nl}(\beta)  = \sum_{\mu\in\Mu} G^{nl}_\mu(\beta) = \sum_{\mu\in\Mu} e^{\frac{\beta}{2} H} P_\mu e^{-  \beta H} P_\mu e^{\frac{\beta}{2} H},
\end{equation}
now the terms are not local (hence the superscript \emph{nl}), and the norm of each term may be exponentially large in $N$. Thus, adiabatic state preparation using \eqref{eq:Gnl} directly takes exponential time. However, in this section we  show that for sufficiently high, but constant temperatures, one can approximate $G^{nl}$ by an $r=O(\log N)$-local Hamiltonian $G^r$ which is a sum of $O(\poly(N))$ terms. We also show that in this case, $G^{nl}$ (and thus also $G^r$) has a $\Omega(1)$ spectral gap and $O(N)$ norm. Because of the existence of the gap, the ground state of $G^r$ is a good approximation of the ground state of $G^{nl}$.

Using the adiabatic theorem, the following algorithm runs in $O(\poly(N))$ time for high enough (but $\Omega(1)$) temperatures and gives a good approximation of the purification of the Gibbs state of a non-commuting Hamiltonian:
\begin{enumerate}
  \item Prepare the ground state of $G^{r}(0)=\sum_{\mu\in\Mu} P_\mu$
  \item Calculate $G^r(\beta)$
  \item Prepare adiabatically the ground state of $G^r(\beta)$
\end{enumerate}

We first use the cluster expansion \cite{Kotecky1986} 
to construct the approximating Hamiltonian $G^r$. We also show that the norm of $G^{nl}$ is $O(N)$. Finally, we show that the gap of $G^{nl}$ is $\Omega(1)$. For simplicity, assume that $H=\sum_{\lambda\in\Lambda}$ is a sum of nearest-neighbour interactions, although the results and proofs generalise to other types of interactions. We also assume that $\|h_\lambda\| \leq 1$. 

\paragraph*{Cluster expansion.}

We now show that $G^{nl}$ can be approximated by an $r=O(\log N)$-local Hamiltonian $G^r$. More precisely, we show the following result.

\begin{theorem} \label{thm:clusterLogLocal}
	For sufficiently small (but constant) $\beta$, there exists an $r=O(\log N)$-local Hamiltonian $G^r$ with $O(\poly( N))$ terms such that
	\begin{equation}
		\|G^{nl} - G^r\| < O(1/\poly(N)).
	\end{equation}
	Moreover, the terms of $G^r$ can be calculated in $O(\poly(N))$ time. 
\end{theorem}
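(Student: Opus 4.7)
The plan is to build $G^r$ via a cluster (Mayer-type) expansion of each local term $G^{nl}_\mu = e^{\beta H/2} P_\mu e^{-\beta H} P_\mu e^{\beta H/2}$, truncated at clusters of diameter $O(\log N)$ around $\mu$. The first step is to expand each of the three exponentials as a Taylor series, and then each power $H^k=(\sum_\lambda h_\lambda)^k$ as a sum over ordered $k$-tuples of local terms. This writes $G^{nl}_\mu$ as a formal sum $\sum_\Gamma c_\Gamma(\beta)\, W_\Gamma$, where $\Gamma$ enumerates ordered triples of sequences of $h_\lambda$'s separated by the two $P_\mu$ insertions, $W_\Gamma$ is the corresponding ordered product, and $|c_\Gamma(\beta)|$ carries a factor $(\beta/2)^{k_1+k_3}\beta^{k_2}/(k_1!k_2!k_3!)$.

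Next I would regroup the expansion by the connected component of the interaction graph containing $\mu$, and define $G^r_\mu$ as the partial sum restricted to $\Gamma$ whose support is contained in the ball of radius $r$ around $\mu$. The heart of the argument is then a Koteck\'y--Preiss/Dobrushin estimate along the lines of \cite{Kotecky1986}: provided $\beta$ is smaller than a constant $\beta_0$ depending only on $d$ and the interaction range $R$, the sum of $|c_\Gamma(\beta)|\,\|W_\Gamma\|$ over clusters of diameter $>r$ rooted at $\mu$ is bounded by $C(\beta/\beta_0)^r$, using the standard growth estimate that the number of connected clusters of size $k$ containing a given vertex is at most $c^k$ for a geometric constant $c$. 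Choosing $r=\Theta(\log N)$ then yields $\|G^{nl}_\mu-G^r_\mu\|=O(1/\poly(N))$, and summing over the $O(N)$ choices of $\mu$ gives the claimed global bound after enlarging the constant in $r$.

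The remaining properties of $G^r$ are combinatorial. The number of contributing clusters of size at most $r$ rooted at $\mu$ is $c^r=\poly(N)$, so $G^r=\sum_\mu G^r_\mu$ decomposes into $O(N\cdot\poly(N))=\poly(N)$ summands, each supported on $O(\log N)$ sites. Each summand can be evaluated by directly enumerating the at most $\poly(N)$ ordered products contributing to it and adding them in $\poly(N)$ time, so the overall classical preprocessing is also $\poly(N)$.

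The main obstacle I expect is adapting the cluster expansion from its standard setting, where it is applied to the partition function $\tr e^{-\beta H}$ or to $e^{-\beta H}$ itself, to the specific operator $G^{nl}_\mu$, which is an ordered product of three exponentials with two projector insertions. In the commuting case, $[h_\lambda,P_\mu]=0$ for $\lambda$ disjoint from $\mu$ allows the distant terms to be pulled through and cancelled between the three exponentials, yielding exact locality; in the non-commuting case this cancellation is only approximate, and the entire decay must come from the combinatorial weight of the expansion. Keeping track of the operator ordering and verifying that the Koteck\'y--Preiss convergence criterion still applies once the usual scalar activity $z_\lambda$ is replaced by an operator-valued weight interspersed with the two $P_\mu$'s (so that $\|W_\Gamma\|\le\prod_i\|h_{\lambda_i}\|\le 1$) is the most technical step; once it is in place, the rest of the estimates reduce to the standard high-temperature polymer bookkeeping.
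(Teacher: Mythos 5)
Your proposal follows essentially the same route as the paper: a cluster expansion of each $G^{nl}_\mu$, truncation at connected clusters of size $O(\log N)$, geometric decay of the tail from the interplay of the small factor $(e^{O(\beta)}-1)^{|\Omega|}$ per cluster against the lattice growth constant, and a polynomial count of the surviving terms. The paper implements your ``regrouping by connected component'' precisely via a M\"obius (inclusion--exclusion) transform $\check f_\mu(\Omega)=\sum_{\Theta\subseteq\Omega}(-1)^{|\Omega\setminus\Theta|}f_\mu(\Theta)$, which makes the cancellation of clusters not connected to $\mu$ \emph{exact} even in the non-commuting case (your worry that this cancellation is ``only approximate'' applies to the locality of $G^{nl}_\mu$ itself, not to the disconnected contributions in the expansion), and the operator ordering you flag as the technical obstacle is handled simply by summing absolute values of the Taylor coefficients with $\|h_\lambda\|\le 1$.
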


For any function $f$ defined on the subsets of $\Lambda$, define the \emph{M\"obius transformations}
  \begin{align}
    \hat{f} ( \Omega) & : =  \sum_{\Theta\subseteq\Omega} f ( \Theta), \label{eq:hat}\\
    \check{f} ( \Omega) & : =  \sum_{\Theta\subseteq\Omega} ( - 1)^{| \Omega \backslash  \Theta |} f( \Theta) \label{eq:check} .
  \end{align}
It is straightforward to check that the following Lemma holds \cite{[{See also }] [{}] PhysRevB.91.045138}.
\begin{lemma}[M{\"o}bius inversion] \label{lem:Mobius}
  \begin{equation}
  	 \hat{\check{f}} = \check{\hat{f}} = f.
\end{equation}  
\end{lemma}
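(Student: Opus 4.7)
The plan is to prove both identities by direct substitution of one M\"obius transformation into the other, swapping the order of summation, and reducing the inner sum to a Kronecker delta via the standard binomial identity $\sum_{S\subseteq T}(-1)^{|S|}=\delta_{T,\emptyset}$.

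First I would unfold the definitions. For $\hat{\check{f}}(\Omega)$, inserting \eqref{eq:check} into \eqref{eq:hat} gives
\begin{equation}
 \hat{\check{f}}(\Omega)=\sum_{\Theta\subseteq\Omega}\check{f}(\Theta)=\sum_{\Theta\subseteq\Omega}\sum_{\Xi\subseteq\Theta}(-1)^{|\Theta\setminus\Xi|}f(\Xi).
\end{equation}
Exchanging the order of summation (the pair $(\Theta,\Xi)$ satisfies $\Xi\subseteq\Theta\subseteq\Omega$) yields
\begin{equation}
 \hat{\check{f}}(\Omega)=\sum_{\Xi\subseteq\Omega}f(\Xi)\sum_{\Theta:\,\Xi\subseteq\Theta\subseteq\Omega}(-1)^{|\Theta\setminus\Xi|}.
\end{equation}

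The second step is to evaluate the bracketed inner sum. Reparameterising by $S=\Theta\setminus\Xi$, which ranges freely over subsets of $\Omega\setminus\Xi$, the inner sum becomes $\sum_{S\subseteq\Omega\setminus\Xi}(-1)^{|S|}=(1-1)^{|\Omega\setminus\Xi|}$, which equals $1$ if $\Xi=\Omega$ and $0$ otherwise. Only the term $\Xi=\Omega$ survives in the outer sum, giving $\hat{\check{f}}(\Omega)=f(\Omega)$.

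Finally, the identity $\check{\hat{f}}=f$ follows by the completely symmetric calculation: substituting \eqref{eq:hat} into \eqref{eq:check} produces
\begin{equation}
 \check{\hat{f}}(\Omega)=\sum_{\Xi\subseteq\Omega}f(\Xi)\sum_{\Theta:\,\Xi\subseteq\Theta\subseteq\Omega}(-1)^{|\Omega\setminus\Theta|},
\end{equation}
and the same reparameterisation (now by $S=\Omega\setminus\Theta\subseteq\Omega\setminus\Xi$) collapses the inner sum to $\delta_{\Xi,\Omega}$. There is no real obstacle here — the only point that requires minor care is bookkeeping when swapping the order of summation and checking that the index set $\{(\Theta,\Xi):\Xi\subseteq\Theta\subseteq\Omega\}$ is correctly reparameterised; everything else is the classical M\"obius inversion on the Boolean lattice.
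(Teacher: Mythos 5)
Your proof is correct and is exactly the routine verification the paper leaves to the reader (it only remarks that the lemma is ``straightforward to check''): substitute one transform into the other, interchange sums over $\Xi\subseteq\Theta\subseteq\Omega$, and collapse the inner alternating sum via $\sum_{S\subseteq T}(-1)^{|S|}=(1-1)^{|T|}=\delta_{T,\emptyset}$. Nothing is missing.
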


For any $\Omega\subseteq\Lambda$, let $H_\Omega = \sum_{\lambda\in \Omega} h_\lambda$, and let $f_\mu ( \Omega) = e^{\beta H_\Omega} P_\mu e^{-  2\beta H_\Omega} P_\mu e^{\beta H_\Omega}$ for any $\mu\in \Mu$. 
Using Lemma~\ref{lem:Mobius}, one can express $G_\mu^{nl}$ as
\begin{equation}\label{eq:clusterexp}
G_\mu^{nl} = f_\mu ( \Lambda) = \sum_{\Omega \subseteq \Lambda} \check{f}_\mu( \Omega).
\end{equation}
This so-called \emph{cluster expansion} has many interesting properties. 

\begin{lemma}\label{lem:clusterzero}
Let $\mu\in\Mu$. 
If $\Omega\subseteq\Lambda$ is such that $\mu$ is disjoint from $\Omega$ and $\Theta\subseteq \Lambda$ is disjoint from $\Omega$, then 
\begin{equation}
  \check{f}_\mu ( \Omega \cup \Theta) =0.
\end{equation}
\end{lemma}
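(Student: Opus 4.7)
The plan is to exploit the two disjointness hypotheses to factorise the operator $f_\mu(\Xi)$ for each $\Xi\subseteq\Omega\cup\Theta$ appearing in the M\"obius sum, and then apply the elementary identity $\sum_{S\subseteq\Omega}(-1)^{|\Omega\setminus S|}=0$, which holds whenever $\Omega\neq\emptyset$ (the nontrivial case of the lemma; for $\Omega=\emptyset$ the claim is vacuous).

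First, I would observe that since $\Omega$ and $\Theta$ have disjoint vertex supports, they are in particular disjoint as subsets of $\Lambda$, so every $\Xi\subseteq\Omega\cup\Theta$ decomposes uniquely as $\Xi=\Xi_\Omega\cup\Xi_\Theta$ with $\Xi_\Omega\subseteq\Omega$ and $\Xi_\Theta\subseteq\Theta$, and $H_\Xi=H_{\Xi_\Omega}+H_{\Xi_\Theta}$ with $[H_{\Xi_\Omega},H_{\Xi_\Theta}]=0$. Because $\mu$ is disjoint from the vertex support of $\Omega$, the projector $P_\mu$ also commutes with $H_{\Xi_\Omega}$ and hence with $e^{\pm\beta H_{\Xi_\Omega}}$.

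Next, I would use these commutations to show $f_\mu(\Xi)=f_\mu(\Xi_\Theta)$. Writing $e^{\pm\beta H_\Xi}=e^{\pm\beta H_{\Xi_\Omega}}e^{\pm\beta H_{\Xi_\Theta}}$ in the definition of $f_\mu(\Xi)$ and pushing every $e^{\pm\beta H_{\Xi_\Omega}}$ factor through $P_\mu$, those factors collect into $e^{\beta H_{\Xi_\Omega}}e^{-2\beta H_{\Xi_\Omega}}e^{\beta H_{\Xi_\Omega}}=\id$, leaving exactly $f_\mu(\Xi_\Theta)$. Substituting this into the definition of $\check f_\mu$ gives the product structure
\begin{equation}
\check{f}_\mu(\Omega\cup\Theta)=\Bigl(\sum_{\Xi_\Omega\subseteq\Omega}(-1)^{|\Omega\setminus\Xi_\Omega|}\Bigr)\Bigl(\sum_{\Xi_\Theta\subseteq\Theta}(-1)^{|\Theta\setminus\Xi_\Theta|}f_\mu(\Xi_\Theta)\Bigr),
\end{equation}
and the first factor equals $(1-1)^{|\Omega|}=0$ by the binomial theorem, yielding the claim.

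The only step requiring any real care is the factorisation $f_\mu(\Xi)=f_\mu(\Xi_\Theta)$, which is precisely where both disjointness hypotheses must be invoked simultaneously (one to commute $H_{\Xi_\Omega}$ past $H_{\Xi_\Theta}$, the other to commute it past $P_\mu$). I do not anticipate any further obstacle; the remainder is routine bookkeeping with the definition of $\check f_\mu$ and the alternating-sum identity.
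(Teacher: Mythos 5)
Your proposal is correct and follows essentially the same route as the paper: decompose each subset of $\Omega\cup\Theta$ as $\Omega'\cup\Theta'$, use the two disjointness hypotheses to cancel the $e^{\pm\beta H_{\Omega'}}$ factors so that $f_\mu(\Omega'\cup\Theta')=f_\mu(\Theta')$, and kill the result with the alternating sum over $\Omega'$. The only (implicit) caveat, which you correctly flag, is that the argument requires $\Omega\neq\emptyset$.
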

\begin{proof}
We have
	\begin{equation}
  \check{f}_\mu ( \Omega \cup \Theta) = \sum_{\Omega' \subseteq \Omega, \Theta' \subseteq \Theta} ( - 1)^{| \Omega\backslash \Omega' |} ( - 1)^{| \Theta\backslash \Theta' |} e^{\beta H_{\Theta'}} P_\mu e^{- 2\beta H_{\Theta'}} P_\mu e^{\beta H_{\Theta'}} =0,
	\end{equation}
since $H_{\Omega'}$ commutes with $P_\mu$ and with $H_{\Theta'}$, and the sum over $\Omega'$ is 0.
\end{proof}
  Lemma~\ref{lem:clusterzero} states that $\check{f}_\mu$ is non-zero only for connected sets of edges that, in addition, contain $\mu$. 
Another interesting property of $\check{f}_\mu$ is that its norm can be bounded as follows. 
\begin{lemma} For any $\Omega\subset\Lambda$ and any edge $\mu$,
 \begin{equation}
 	\|\check f_\mu(\Omega)\| \leq (e^{4\beta} -1 )^{|\Omega|}.
 \end{equation}
\end{lemma}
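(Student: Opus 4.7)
The plan is to expand $f_\mu(\Theta) = e^{\beta H_\Theta} P_\mu e^{-2\beta H_\Theta} P_\mu e^{\beta H_\Theta}$ as a Taylor series in $\beta$ and keep track of which monomials survive the M\"obius sum. Using $H_\Theta = \sum_{\lambda\in\Theta}h_\lambda$, each $H_\Theta^n$ expands as a sum over sequences $(\lambda_1,\dots,\lambda_n)\in\Theta^n$, so I can write
\begin{equation}
f_\mu(\Theta)=\sum_{k,l,m\geq 0}\frac{\beta^k(-2\beta)^l\beta^m}{k!\,l!\,m!}\sum_{(\lambda_1,\dots,\lambda_{k+l+m})\in\Theta^{k+l+m}} h_{\lambda_1}\!\cdots h_{\lambda_k} P_\mu h_{\lambda_{k+1}}\!\cdots h_{\lambda_{k+l}} P_\mu h_{\lambda_{k+l+1}}\!\cdots h_{\lambda_{k+l+m}}.
\end{equation}

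Next I would interchange the (absolutely convergent) sum with $\sum_{\Theta\subseteq\Omega}(-1)^{|\Omega\setminus\Theta|}$ and compute the M\"obius weight of a fixed sequence $(\lambda_1,\dots,\lambda_n)$. Letting $S$ be the set of distinct values of the sequence, the sequence contributes to $f_\mu(\Theta)$ exactly when $S\subseteq\Theta\subseteq\Omega$; parametrising such $\Theta=S\cup B$ with $B\subseteq\Omega\setminus S$ gives a weight $\sum_{B\subseteq\Omega\setminus S}(-1)^{|\Omega\setminus S\setminus B|}=(1-1)^{|\Omega\setminus S|}$, which vanishes unless $S=\Omega$. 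Hence
\begin{equation}
\check{f}_\mu(\Omega)=\sum_{k,l,m}\frac{\beta^k(-2\beta)^l\beta^m}{k!\,l!\,m!}\sum_{\substack{(\lambda_1,\dots,\lambda_n)\in\Omega^n\\\{\lambda_1,\dots,\lambda_n\}=\Omega}}h_{\lambda_1}\!\cdots P_\mu\!\cdots P_\mu\!\cdots h_{\lambda_n},
\end{equation}
i.e., only sequences whose support is all of $\Omega$ contribute, with unit coefficient.

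Finally I take operator norms, using $\|h_\lambda\|,\|P_\mu\|\leq 1$ and the multinomial identity $\sum_{k+l+m=n}\beta^k(2\beta)^l\beta^m/(k!\,l!\,m!)=(4\beta)^n/n!$, to obtain $\|\check{f}_\mu(\Omega)\|\leq\sum_{n\geq 0}N(\Omega,n)(4\beta)^n/n!$, where $N(\Omega,n)$ is the number of length-$n$ sequences in $\Omega$ hitting every element. The closing step is a standard exponential generating function: by inclusion-exclusion $N(\Omega,n)=\sum_{S\subseteq\Omega}(-1)^{|\Omega|-|S|}|S|^n$, so
\begin{equation}
\sum_{n\geq 0}N(\Omega,n)\frac{(4\beta)^n}{n!}=\sum_{S\subseteq\Omega}(-1)^{|\Omega|-|S|}e^{4\beta|S|}=(e^{4\beta}-1)^{|\Omega|},
\end{equation}
which is the claimed bound. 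The one potentially delicate step is the interchange of sums and the bookkeeping in the M\"obius computation; an equivalent cleaner formulation is to observe that iterating the discrete derivative $\Delta_\lambda g(\Theta)=g(\Theta\cup\{\lambda\})-g(\Theta)$ yields $\check{f}_\mu(\Omega)=\big(\prod_{\lambda\in\Omega}\Delta_\lambda f_\mu\big)(\emptyset)$, and applying this representation termwise to the Taylor expansion gives the same result without an explicit combinatorial identity. Either way, absolute convergence (for any $\beta$) is guaranteed by the norm bounds on $h_\lambda$ and $P_\mu$, so no analytic subtleties arise.
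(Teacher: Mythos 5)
Your proof is correct and follows essentially the same route as the paper: expand the exponentials into sums over sequences of $h_\lambda$'s, observe that the M\"obius/inclusion--exclusion sum kills every sequence whose support is a proper subset of $\Omega$, and then bound the norm of the surviving (surjective) terms. The only cosmetic difference is in the last step: the paper recognises the resulting sum over surjective sequences as itself a M\"obius transform of $\Theta\mapsto e^{4\beta|\Theta|}$, whereas you invoke the explicit surjection-counting identity $N(\Omega,n)=\sum_{S\subseteq\Omega}(-1)^{|\Omega|-|S|}|S|^n$ together with a multinomial collapse of the three blocks --- these are the same identity read in opposite directions.
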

\begin{proof}
Expanding the exponentials, one gets
\begin{equation}\label{eq:checkf1}
 \check{f}_\mu ( \Omega) = \sum_{\Theta \subseteq \Omega} ( - 1)^{| \Omega\backslash \Theta |}    \sum_{w_1, w_2, w_3 \in \Theta^{\ast}} \frac{( - \beta)^{| w_1 |} \cdot ( 2   \beta)^{| w_2 |} \cdot ( - \beta)^{| w_3 |}}{| w_1 | ! \cdot | w_2 | !   \cdot | w_3 | !} h_{w_1} P_\mu h_{w_2} P_\mu h_{w_3} ,
\end{equation}
where $\Theta^*$ is the set of all finite sequences of elements of $\Theta$,  and for any $w\in\Theta^*$, $|w|$ denotes the length of $w$ and $h_w = h_{\lambda_1}\ldots h_{\lambda_{|w|}}$ if $w = (\lambda_1,\ldots,\lambda_{|w|})$.

Consider  the set $A=\mathrm{supp}(w_1) \cup \mathrm{supp}(w_2) \cup\mathrm{supp} (w_3)$. If $A\neq\Omega$, then the alternating sum in \eqref{eq:checkf1} over all $\Theta$ such that $A\subseteq\Theta\subseteq\Omega$ is 0. Thus, 
\begin{equation}
 \| \check{f}_\mu ( \Omega) \| \leq \sum_{{\begin{array}{c}
     w_1, w_2, w_3 \in \Omega^{\ast}\\
     {\supp} (w_1) \cup {\supp} (w_2) \cup {\supp} (w_3) = \Omega
   \end{array}}} \frac{\beta^{| w_1 |} \cdot ( 2 \beta)^{| w_2 |} \cdot
   \beta^{| w_3 |}}{| w_1 | ! \cdot | w_2 | ! \cdot | w_3 | !}.
\end{equation}
But this is exactly a M{\"o}bius transform, so
\begin{equation}
 \| \check{f}_\mu ( \Omega) \| \leq \sum_{\Theta \subseteq \Omega} ( - 1)^{| \Omega\backslash \Theta |}  \sum_{w_1, w_2, w_3 \in \Theta^{\ast}} \frac{\beta^{| w_1 |} \cdot ( 2 \beta)^{|w_2 |} \cdot \beta^{| w_3 |}}{| w_1 | ! \cdot | w_2 | ! \cdot | w_3 | !} = \sum_{\Theta \subseteq \Omega} ( - 1)^{| \Omega\backslash \Theta |} e^{4 \beta | \Theta |} = (e^{4 \beta} - 1)^{| \Omega |}.
\end{equation}
\end{proof}

We are now in a position to prove Theorem~\ref{thm:clusterLogLocal}.

\begin{proof}[Proof of Theorem~\ref{thm:clusterLogLocal}]
Using \eqref{eq:clusterexp}, we can write $G_\mu^{nl}$ as a sum of local terms where the norm of the terms decay exponentially with their support. As the number of terms with a given size is bounded by the lattice growth constant \cite{Klarner1967},  $\|G_\mu^{nl}\| = O(1)$ above some temperature. Indeed, let $\eta$ be the lattice growth constant, so that the number of sets of connected edges containing $\mu$ and of size $M$ is bounded by $e^{\eta M}$. Then,

\begin{equation}
\|G_\mu^{nl}\| \leq \sum_{\Omega \subseteq \Lambda} \|\check{f}_\mu( \Omega)\| \leq \sum_{M\geq 0} e^{\eta M} (e^{4\beta} -1)^M=O(1)
\end{equation}
if $\beta$ is sufficiently small (but constant). 
In this case, $G_\mu^{nl}$ can be approximated by an $r$-local operator $G^r_\mu$ by omitting all connected sets $\Omega$ of size at least $r$. The error of this approximation is 
\begin{equation}\label{eq:clustertrunc_error}
\|G^{nl}_\mu-G^r_\mu\| \leq \sum_{|\Omega|\geq r} \|\check{f}_\mu( \Omega)\| \leq \sum_{M\geq r} e^{\eta M} (e^{4\beta} -1)^M = \frac{\left(e^\eta \left(e^{4\beta}-1\right)\right)^r}{1-e^\eta \left(e^{4\beta}-1\right)} = \frac{y^r}{1-y},
\end{equation}
where $ y=e^\eta \left(e^{4\beta}-1\right)$. 
Therefore, above some constant temperature, the cluster expansion can be truncated at $|\Omega|\leq O(\log N)$, giving an error of $O(1/\poly(N))$. This results in a $O(\log
(N))$-local Hamiltonian
\begin{equation}
G^r =  \sum_\mu \sum_{|\Omega|\leq O(\log N)} \check{f}_\mu(\Omega) 
\end{equation}
 with $O(\poly ( N))$ terms. Note that this Hamiltonian can now be calculated in $O(\poly( N))$ time.
Indeed, there are $O(\poly (N))$ terms $\check{f}_\mu (\Omega)$, and each term can
be evaluated in $O(\poly (N))$ time since there are at most
$O(\poly (N))$ subsets of each $\Omega$.
\end{proof}

\paragraph*{Gap of $G^{nl}$.}

It remains to be shown that at sufficiently high (but constant) temperatures, the parent Hamiltonian is gapped.

\begin{theorem}\label{thm:gap}
 For sufficiently small (but constant) $\beta$, $G^{nl}$ has a spectral gap of $\Omega(1)$. 
\end{theorem}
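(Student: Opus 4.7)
The plan is to treat $G^{nl}(\beta)$ as a local perturbation of its exactly solvable $\beta=0$ limit and invoke a spectral-gap stability theorem for frustration-free Hamiltonians.

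First, since the pairs $\mu\in\Mu$ are mutually excluding (disjoint), the projectors $P_\mu$ commute and
\be
G^{nl}(0) = \sum_{\mu\in\Mu} P_\mu
\ee
is a sum of commuting projectors on pairwise disjoint supports. It therefore has integer spectrum, unique product ground state $\bigotimes_\mu|\phi^+\rangle_\mu/\sqrt{d}$, and spectral gap exactly $1$.

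Second, the cluster expansion from the proof of Theorem~\ref{thm:clusterLogLocal} decomposes the perturbation as
\be
G^{nl}_\mu(\beta) - P_\mu = \sum_{\Omega\neq\emptyset}\check{f}_\mu(\Omega),
\ee
with $\|\check{f}_\mu(\Omega)\|\leq(e^{4\beta}-1)^{|\Omega|}$ and, by Lemma~\ref{lem:clusterzero}, only connected clusters $\Omega$ meeting $\mu$ contributing. Combined with the lattice-growth bound $e^{\eta M}$ on the number of such clusters of size $M$, the local perturbation $W_\mu:=G^{nl}_\mu-P_\mu$ satisfies $\|W_\mu\|\leq y/(1-y)$ with $y=e^\eta(e^{4\beta}-1)$, uniformly in $\mu$ and $N$ and arbitrarily small for $\beta$ small enough. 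Truncating the cluster sum at size $r$ gives a strictly $r$-local approximation $W_\mu^r$ with $\|W_\mu-W_\mu^r\|\leq y^r/(1-y)$, so the full perturbation $\sum_\mu W_\mu$ is quasi-local with exponentially decaying tails around each $\mu$.

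I would then invoke a spectral-gap stability theorem for frustration-free commuting-projector Hamiltonians under local perturbations of small uniform norm (such as Bravyi-Hastings-Michalakis or Michalakis-Zwolak). The local-topological-quantum-order hypothesis of such theorems is trivially satisfied here because the unperturbed ground state is an exact product across the disjoint pairs $\mu$, so local reductions of the ground-state projector factorise completely and are insensitive to the complement region. For $\beta$ below a constant threshold (ensuring the single-site strength $\|W_\mu\|$ and the tail-decay rate meet the theorem's thresholds), this yields a spectral gap $\Omega(1)$ for $G^{nl}(\beta)$, uniformly in $N$. The main obstacle is matching our quasi-local, exponentially decaying $W_\mu$ to the strictly-local hypotheses of the chosen stability theorem; the truncation bound $y^r/(1-y)$ allows one to reduce, with polynomially small error, to the strictly $O(\log N)$-local perturbation $\sum_\mu W_\mu^r$ to which these theorems apply directly, and this error only shifts the gap by $o(1)$.
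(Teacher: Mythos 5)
Your strategy is sound and genuinely different from the paper's. The paper never invokes a gap-stability theorem; it gives a self-contained argument via the frustration-free gap criterion: since $G^{nl}\geq 0$ is frustration-free, it suffices to prove the operator inequality $(G^{nl})^2\geq\Delta G^{nl}$ for a constant $\Delta$. The diagonal terms $(G^{nl}_\mu)^2$ are handled exactly as you treat the unperturbed part (via the $r=1$ truncation $G^{nl}_\mu\approx P_\mu$); the real work is in the cross terms. The paper writes $\sum_\mu G^{nl}_\mu\geq\tfrac1x\sum_r e^{-r}\sum_{d(\mu,\nu)=r}(G^{nl}_\mu+G^{nl}_\nu)$ and shows, for each pair at distance $r$, that $G^{nl}_\mu G^{nl}_\nu$ is bounded below by $-O(y^{\lceil r/2\rceil})$ times the projector off the common kernel: truncating each factor to radius $\lceil r/2\rceil$ via the cluster-expansion tail bound makes the (positive parts of the) two factors commute, hence their product positive, with an error exponentially small in $r$. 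Your route replaces all of this with BHM/MZ-type stability, which does apply here: $G^{nl}(0)$ is a commuting-projector Hamiltonian with an exact product ground state, so LTQO and the local-gap condition hold exactly, and the cluster expansion delivers the perturbation in precisely the anchored, exponentially decaying form $W_\mu=\sum_r V_{\mu,r}$ with $V_{\mu,r}=\sum_{|\Omega|=r}\check f_\mu(\Omega)$ and $\|V_{\mu,r}\|\leq y^r$ that those theorems accept. One packaging remark: your final reduction to a strictly $O(\log N)$-local perturbation goes the wrong way --- a single term of norm $O(y)$ supported on a ball of radius $O(\log N)$ does \emph{not} meet the hypotheses of these theorems (the strength must decay with the diameter of the support), whereas the graded decomposition you already have is exactly what they require, so no truncation is needed. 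You should also record explicitly that $G^{nl}(\beta)\geq 0$ is frustration-free with the Gibbs purification as its unique zero-energy state, so the surviving gap sits above the state one actually wants. What the paper's approach buys is self-containedness and explicit constants; yours buys brevity at the price of importing heavy machinery (for this product unperturbed Hamiltonian, the lighter stability results of Yarotsky or Datta--Fern\'andez--Fr\"ohlich would also suffice).
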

\begin{proof}
To show the existence of a gap, we use that $G^{nl}\geq0$ is frustration-free, so it is enough to show that 
\begin{equation}\label{eq:square_gap_cond}
 (G^{nl})^2 \geq \Delta G^{nl} 
\end{equation}
for some $\Delta=\Omega(1)$. 
Expanding $G^{nl}$, \eqref{eq:square_gap_cond} is equivalent to
\begin{equation}
  \sum_\mu (G_\mu^{nl})^2 + \sum_{\mu\neq \nu} G_\mu^{nl} G_\nu^{nl} \geq \sum_\mu\Delta G_\mu^{nl}.
\end{equation}
Using Eq. (\ref{eq:clustertrunc_error}) with $r=1$, we get that $G_\mu^{nl}$ is close to $P_\mu=G_\mu^{r=1}$ for high temperatures and thus it is gapped and the gap is close to 1. Therefore, it is enough to show that for some other constant $\Delta'<1$,
\begin{equation}
  \sum_{\mu\neq \nu} G_\mu^{nl} G_\nu^{nl} \geq -\sum_\mu\Delta' G_\mu^{nl}.
\end{equation}
We upper bound the r.h.s. by lower bounding $\sum_\mu G_\mu^{nl}$ as
\begin{equation}
  \sum_\mu G_\mu^{nl} \geq \frac{1}{x} \sum_r e^{-r}\sum_{d(\mu,\nu)=r} G_\mu^{nl} +G_\nu^{nl} ,
\end{equation}
where $x = 2 \sum_r e^{-r} C_d r^d=O(1)$ is the number of times a single term is counted, and $d$ is the dimension of the lattice. Therefore, it is enough to prove that for a given $r$ and any pair $\mu,\nu$ with $d(\mu,\nu)=r$,
 \begin{equation}\label{eq:gap_GG}
 G_\mu^{nl} G_\nu^{nl} \geq - \Delta'\frac{1}{x} e^{-r} (G_\mu^{nl} + G_\nu^{nl}).
 \end{equation}
Note that the kernel of the LHS of \eqref{eq:gap_GG} is contained in the kernel of the RHS. Next, $G^{nl}_\mu+G^{nl}_\nu$ can be lower bounded by
\begin{equation}
  G^{nl}_\mu+G^{nl}_\nu \geq \frac{1}{2} \left(1-P_{Ker(G^{nl}_\mu+G^{nl}_\nu)}\right),
\end{equation}
since $G^{nl}_\mu+G^{nl}_\nu\approx P_\mu + P_\nu$, which has gap 1, and at sufficiently high (but constant) temperature the difference is sufficiently small.

To lower bound the l.h.s of \eqref{eq:gap_GG}, write $G_\mu^{nl} = \left|G_\mu^{\lceil r/2 \rceil}\right|+X_\mu$ and $G_\nu^{nl} = \left|G_\nu^{\lceil r/2 \rceil}\right|+X_\nu$. $\left|G_\nu^{\lceil r/2 \rceil}\right|$ and $\left|G_\mu^{\lceil r/2 \rceil}\right|$ commute as they are supported on two disjoint regions, and they are positive, thus their product is also positive. The norm of $X_\mu,X_\nu$ is bounded by (with $y=e^\eta\left(e^{4\beta}-1\right)$ as defined in the proof of Theorem~\ref{thm:clusterLogLocal})
\begin{equation}
	\|X_\mu\|= \|G_\mu -|G_\mu^{\lceil r/2 \rceil}|\|\leq \|G_\mu -G_\mu^{\lceil r/2 \rceil}\|+ \|G_\mu^{\lceil r/2 \rceil}-|G_\mu^{\lceil r/2 \rceil}|\| \leq 3 \frac{y^{\lceil r/2 \rceil}}{1-y},
\end{equation}
since $\|G_\nu^{nl} - G_\nu^{\lceil r/2 \rceil}\|\leq y^{\lceil r/2 \rceil}/{(1-y)}$ by \eqref{eq:clustertrunc_error}, and thus $G_\nu^{\lceil r/2 \rceil}\geq -y^{\lceil r/2 \rceil}/{(1-y)}$, so $\left|G_\nu^{\lceil r/2 \rceil}-|G_\nu^{\lceil r/2 \rceil}|\right|\leq 2y^{\lceil r/2 \rceil}/{(1-y)}$. 
Using that above some constant temperature $\|G_\mu^{r}\|<2$, we get that 
\begin{equation}
  \left\|G_\mu^{nl} G_\nu^{nl} -|G_\mu^{\lceil r/2 \rceil}||G_\nu^{\lceil r/2 \rceil}|\right\|\leq 18 \frac{y^{\lceil r/2 \rceil}}{1-y}\leq \Delta'\frac{1}{2x} e^{-r}
\end{equation}
for sufficiently small (but constant) $\beta$. 
 Therefore for any $\mu, \nu$ pair, the following is true:
 \begin{equation}
 G_\mu^{nl} G_\nu^{nl} \geq - 18 \frac{y^{\lceil r/2 \rceil}}{1-y} [1-P_{Ker (G_\mu^{nl} G_\nu^{nl})}]\geq - \Delta'\frac{1}{2x} e^{-r}  [1-P_{Ker (G_\mu^{nl} +G_\nu^{nl})}]\geq - \Delta'\frac{1}{x} e^{-r} (G_\mu^{nl} + G_\nu^{nl}),
 \end{equation}
 as the kernel of $G_\mu^{nl} G_\nu^{nl}$ contains the kernel of $G_\mu^{nl}+G_\nu^{nl}$. 
 This proves Eq.~\eqref{eq:gap_GG} and thus Theorem~\ref{thm:gap}. 
\end{proof}

\end{document}